\newtheorem{lemma}{Lemma}
\newtheorem{definition}{Definition}
\journal{Computers \& Operations Research}
\begin{document}

\begin{frontmatter}



\title{A Near-Real-Time Reduction-Based Algorithm for Coloring Massive Graphs}


\author[UESTC]{Chenghao Zhu}
\author[UESTC]{Yi Zhou\corref{cor}}
\cortext[cor]{Corresponding author} \ead{zhou.yi@uestc.edu.cn}
\affiliation[UESTC]{organization={University of Electronic Science and Technology 
 of China},
            city={Chengdu},
            country={China}}
\begin{abstract}
The graph coloring problem is a classical combinatorial optimization problem with important applications such as register allocation and task scheduling, and it has been extensively studied for decades. However, near-real-time algorithms that can deliver high-quality solutions for very large real-world graphs within a strict time frame remain relatively underexplored. In this paper, we try to bridge this gap by systematically investigating reduction rules that shrink the problem size while preserving optimality. For the first time, domination reduction, complement crown reduction, and independent set reduction are applied to large-scale instances. Building on these techniques, we propose RECOL, a reduction-based algorithm that alternates between fast estimation of lower and upper bounds, graph reductions, and heuristic coloring. We evaluate RECOL on a wide range of benchmark datasets, including SNAP, the Network Repository, DIMACS10, and DIMACS2. Experimental results show that RECOL consistently outperforms state-of-the-art algorithms on very large sparse graphs within one minute. Additional experiments further highlight the pivotal role of reduction techniques in achieving this performance.
\end{abstract}



\begin{keyword}
graph coloring \sep large graphs \sep reduction \sep near-real-time algorithm
\end{keyword}

\end{frontmatter}


\label{}
\section{Introduction}

Given an undirected graph $G$, the graph coloring problem seeks an assignment of colors to vertices such that adjacent vertices receive different colors, while minimizing the total number of colors used. The minimum number of colors required for a graph is called its \textit{chromatic number}, and the problem is therefore also known as the \textit{chromatic number problem}. Graph coloring is a fundamental problem in combinatorial optimization, operations research, and graph theory. The decision version was included in Karp’s seminal list of 21 NP-complete problems in 1972, and even the 3-coloring problem is NP-complete. Moreover, it is NP-hard to approximate the chromatic number within a factor of $n^{1-\epsilon}$ for any $\epsilon > 0$ \cite{zuckerman2006linear}. From a parameterized complexity perspective, the problem is W[1]-hard with respect to the number of colors $k$, suggesting that it is unlikely to admit an FPT algorithm running in time $f(k)n^{O(1)}$ \cite{downey2012parameterized}.

The graph coloring problem has a broad range of industrial applications, including register allocation, course timetabling, train platform assignment, frequency assignment, and other scheduling or resource allocation tasks. A common modeling approach represents entities as vertices, conflicts as edges, and resources or time slots as colors. For instance, in frequency assignment for wireless networks, nearby transmitters must be allocated different frequencies to avoid interference, and the chromatic number corresponds to the minimum spectrum required. As a matter of fact, graph coloring has long been regarded as a classical textbook problem.

Due to the great importance of the graph coloring problem, there exists a rich number of empirical solution approaches despite its theoretical hardness. 
These approaches can be divided into exact algorithms and heuristic (or inexact) algorithms. 
A more detailed review of these algorithms is deferred to the next section. 
In general, the 2nd DIMACS challenge, held in the 1990s, initiated the competition for finding chromatic numbers for a set of 87 benchmark graphs~\cite{dimacs2}. 
Since then, a great number of empirical algorithms have been proposed and used these graphs as benchmark sets.
To date, most of these DIMACS graphs have already been closed (most of them are proven optimal by exact algorithms). 
There are actually only a few instances for which existing algorithms cannot prove the optimal value.
However, to solve all these instances, these dedicated exact or heuristic algorithms need to run for days or even weeks to pursue the optimal answer. 
It is clear that this configuration is unacceptable in real-world scenarios, especially when real-time response is much more important than optimality.
For example, in the platform of frequency assignment or in compilers that ask for register allocation, the response should be in real-time, e.g., no more than one minute.

On the other hand, graphs from 2nd DIMACS challenges are typically dense, i.e., with an average edge density\footnote{The edge density is computed by $\frac{|E|}{\binom{|V|}{|2|}}\times 100\%$} of $55.91\%$, artificially generated, i.e., regular or large known cliques embedded in several graphs, and with no more than 4000 vertices. 
This is almost opposite to the real-world graphs, which are sparse, have no fixed structures, and are arbitrarily structured.
Particularly, the rapid development of data science in recent decades has brought many very large-scale, sparse real-world graphs. A large part of these graphs has at least 1 million vertices, as seen in datasets like SNAP~\cite{snapnets} and the Network Repository~\cite {networkrepository}. 
In fact, a later competition hosted by DIMACS, the 10th DIMACS challenge in 2010, introduced a larger and somehow modern collection of graphs, mixing extremely large and sparse real-world networks with artificially generated ones (e.g., Erdős–Rényi graphs) \cite{bader2013graph,bader2018benchmarking}.
However, according to our literature review, only a few existing approaches primarily focus  on obtaining reasonable coloring solutions within practical time constraints in these graphs \cite{cai2015balance,lin2017reduction}.
Hence, to date, it is believed there is still a strong need for graph coloring algorithms that produce near-optimal solutions for very large, sparse graphs in a short time, e.g., within one minute. 


\subsection{Main Contribution}

The main contributions of this paper are as follows:
\begin{itemize}
    \item We conduct a systematic study of data reduction rules for the graph coloring problem. A reduction rule transforms a graph into a smaller instance while preserving optimality. Beyond the commonly used degree-based rules, we introduce and implement domination reduction, limited complement crown reduction, and independent set reduction. These rules significantly decrease instance size and thereby facilitate efficient coloring of very large graphs.
    \item We develop a new heuristic algorithm, \textit{RECOL}, which embeds reduction rules into a cyclic framework that alternates between reductions, fast lower- and upper-bound estimation, and heuristic coloring. In contrast to existing approaches that apply reductions only as preprocessing, \textit{RECOL} repeatedly combines heuristic coloring (via independent set extraction) with reductions, enhancing their effectiveness and extending their applicability.
    \item We perform extensive experiments on diverse benchmark datasets, including the Network Repository, SNAP, DIMACS10, and DIMACS2. Under a strict one-minute time limit, \textit{RECOL} outperforms state-of-the-art algorithms on most benchmark sets. The only exception arises in extremely dense DIMACS2 graphs, where reductions are inherently limited. Additional experiments and ablation studies further demonstrate the critical role of reductions in achieving strong performance.
\end{itemize}

All codes and appended data are publicly available at \url{https://github.com/axs7385/RECOL}.






\subsection{Paper Organization}
The remainder of this paper is organized as follows. Section 2 reviews related work. Section 3 introduces the notation and key concepts used in the problem formulation. Section 4 presents the main reduction rules for graph simplification, including degree-based, domination, crown, and independent set reductions. Section 5 integrates these reductions with fast lower- and upper-bound estimation procedures into a complete near-real-time algorithm. Section 6 reports the experimental evaluation, including benchmark comparisons and an analysis of the impact of reductions. Finally, Section 7 concludes the paper and discusses directions for future research.

\section{Related Work}

Graph coloring has been a central focus of algorithmic research for decades, with solution approaches broadly divided into three primary categories: \textit{short-term heuristic methods}, \textit{long-term heuristic methods}, and \textit{exact methods}.  
Each of these categories is suitable for different types of graphs and application scenarios.

\paragraph{Short-term heuristic methods} 
Short-term algorithms aim to obtain a feasible coloring within a very limited time, typically under one minute. Most of these fast heuristics are based on greedy principles. Within this class, one family of approaches traverses vertices in a chosen order, assigning feasible colors sequentially, while another repeatedly extracts independent sets, assigning each set a single color.

In the vertex-ordering family, a wide variety of strategies have been proposed. Early work focused on static orderings, such as Largest First and eigenvalue-based schemes~\cite{hoffman1970eigenvalues,chow1990priority}. Later, dynamic orderings gained prominence, with the saturation degree heuristic becoming the most widely adopted. The most influential representative is \emph{DSatur}, which iteratively selects the uncolored vertex with the largest number of distinct neighbor colors~\cite{brelaz1979new}. DSatur remains a classic greedy heuristic and serves as a building block for many subsequent algorithms, both heuristic and exact. For example, \emph{FastColor}~\cite{lin2017reduction} alternates between reduction techniques and DSatur-based coloring, although its reductions are restricted to degree-based rules.

More recently, learning-based orderings have been introduced, where machine learning models such as graph neural networks are trained to predict effective vertex orderings~\cite{ijaz2022solving,langedal2025graph}. These methods extend the traditional greedy framework with data-driven guidance, often producing higher-quality orderings that adapt to diverse graph distributions. However, their benefits come with additional computational overhead for model training and inference, which may limit practicality on very large or time-critical instances.

The independent-set-based strategy is represented most prominently by the \emph{Recursive Largest First (RLF)} algorithm~\cite{leighton1979graph}, which iteratively identifies large independent sets, assigns them a new color, removes the corresponding vertices, and continues on the residual graph. Like DSatur, RLF has become a cornerstone for many heuristic and hybrid approaches, and its principle of independent-set construction continues to inspire modern adaptations~\cite{palubeckis2008recursive}.

\paragraph{Long-term heuristic methods}  
 Long-term heuristics are designed to achieve high-quality solutions, with runtimes that may extend to days or weeks. Unlike short-term heuristics, which emphasize speed, long-term methods prioritize solution quality and robustness, and often deliver state-of-the-art results on large benchmarks. Their defining characteristic is the systematic exploration of the solution space, typically through long-term search strategies that combine diverse neighborhoods, sophisticated diversification mechanisms, and intensive refinement phases.

Among these, local search methods such as \emph{Tabu Search} have been particularly influential~\cite{zufferey2008graph,hao2012improving}. Tabu Search employs a tabu list to forbid recently applied moves, preventing cycling, and many variants use large-neighborhood or swap-based moves to escape local optima. Beyond Tabu Search, other metaheuristics have also been widely applied, including evolutionary algorithms~\cite{eiben1998graph,galinier1999hybrid,goudet2021population}, simulated annealing~\cite{pal2012comparative,kole2022solving}, and configuration checking~\cite{wang2020reduction}. A survey by Galinier and Hertz provides a comprehensive overview of algorithms developed before 2006~\cite{galinier2006survey}.

More recent advances include \emph{GC-SLIM}, a Tabu Search-based algorithm that repeatedly extracts small induced subgraphs, solves them via SAT as list-coloring instances, and propagates results through chaining~\cite{schidler2023sat}. It has proven highly effective for large-scale instances. Very recently, \emph{LS-WGCP}, a local search algorithm for weighted graph coloring, was proposed~\cite{pan2025towards}. Unlike traditional tabu-based methods, it incorporates a configuration checking strategy to avoid cycling.

\paragraph{Exact Methods}
Exact algorithms aim to compute the chromatic number or certify optimality through exhaustive search combined with advanced pruning techniques. Common approaches include branch-and-bound~\cite{furini2017improved,san2012new,malaguti2011exact,morrison2014wide}, branch-and-price~\cite{malaguti2015branch}, SAT-based methods~\cite{ijcai2019p856,sewell1996improved}, constraint programming~\cite{10.1007/978-3-030-30048-7_13,gualandi2012exact}, and integer linear programming~\cite{burke2010supernodal,margot2009symmetry,mendez2008cutting}, as well as hybrid techniques that combine these paradigms. Exact methods are effective for small to medium-sized instances and can exploit special structural properties, even in dense graphs. However, their runtime and memory demands become prohibitive for very large sparse graphs arising in real-world applications.

\section{Preliminaries}

In this section, we introduce the basic notation used throughout the paper.

Let $G = (V, E)$ be a simple undirected graph, where $V = \{v_1, v_2, \ldots, v_n\}$ is the vertex set and $E = \{e_1, e_2, \ldots, e_m\}$ is the edge set. The complement graph of $G$ is denoted by $\overline{G} = (V, \overline{E})$, where $\overline{E} = \{\{u, v\} \mid \{u, v\} \notin E, u \neq v\}$. In other words, $\overline{G}$ has the same vertex set as $G$, but its edge set consists of all edges absent from $G$.

For a vertex $v \in V$, the \textit{open neighborhood} is defined as $N_G(v) = \{u \mid \{u, v\} \in E\}$, and the \textit{closed neighborhood} is $N_G[v] = N_G(v) \cup \{v\}$. For a subset $S \subseteq V$, the open neighborhood is $N_G(S) = \bigcup_{u \in S} N_G(u)$, and the closed neighborhood is $N_G[S] = \bigcup_{u \in S} N_G[u]$. When the context is clear, we abbreviate $N(u) = N_G(u)$, $\overline{N}(u) = N_{\overline{G}}(u)$, $N(S) = N_G(S)$, and $\overline{N}(S) = N_{\overline{G}}(S)$. The same notation applies to closed neighborhoods.

A subset $I \subseteq V$ is called an \textit{independent set} if no two vertices in $I$ are adjacent, i.e., $\{u, v\} \notin E$ for all distinct $u, v \in I$. Conversely, a subset $C \subseteq V$ is a \textit{clique} if every two vertices in $C$ are adjacent, i.e., $\{u, v\} \in E$ for all distinct $u, v \in C$.

A feasible coloring of $G = (V,E)$ is a partition of $V$ into pairwise disjoint independent sets $\mathcal{C}$ such that $\bigcup_{I \in \mathcal{C}} I = V$. We refer to such a partition as an \textit{independent set cover}. The \textit{chromatic number} of $G$, denoted by $\chi(G)$, is the minimum cardinality of an independent set cover of $G$. We denote the set of all minimum independent set covers of $G$ by $ic(G)$.


\section{Practical Reduction Rules}
We first discuss some practical reduction rules for the graph coloring problem. 
A reduction rule for the graph coloring is a mapping that, given a graph $G$, checks a specific condition and, if satisfied, transforms $G$ into another instance $G'$ such that:
\begin{itemize}
    \item (Equivalence) The optimal solution to $G'$ can be extended in polynomial time to an optimal solution of $G$ without missing the optimal objective value.
    \item (Size Decrease) The instance $G'$ is smaller than $G$ under some natural measure (e.g., fewer vertices, fewer edges).
\end{itemize}

\subsection{Degree Reduction}
Generally, vertices with smaller degrees are easier to color. 
Thus, we can color these vertices after coloring the other vertices.

\begin{lemma}
Let $G=(V,E)$ be a graph and $u\in V$. If $|N(u)|<\chi (G[V \setminus \{u\}])$, then $\chi(G)=\chi(G[V\setminus \{u\}])$. Moreover, for every $\mathcal{C}\in ic(G[V\setminus \{u\}])$, there exists an independent set $I' \in \mathcal{C}$ such that $I' \cup \{u\}$ remains an independent set in $G$,  and we can construct a valid independent set cover 
\[
\mathcal{C}'=(\mathcal{C}\setminus \{I'\})\cup\{I'\cup\{u\}\} \in \mathcal{C}'\in ic(G).
\]
\end{lemma}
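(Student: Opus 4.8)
The plan is to prove the two directions of the equality $\chi(G)=\chi(G[V\setminus\{u\}])$ separately and, in doing so, exhibit the claimed cover $\mathcal{C}'$. Write $H=G[V\setminus\{u\}]$ and $k=\chi(H)$ for brevity. The easy direction is $\chi(G)\ge k$: since $H$ is an induced subgraph of $G$, any feasible coloring of $G$ restricts to a feasible coloring of $H$ using no more colors, so deleting a vertex cannot increase the chromatic number. This gives $\chi(G)\ge\chi(H)=k$ at once.

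For the reverse inequality I would start from an arbitrary minimum independent set cover $\mathcal{C}\in ic(H)$, which by definition consists of exactly $k$ pairwise disjoint independent sets whose union is $V\setminus\{u\}$. Because $G$ is simple we have $u\notin N(u)$, so every neighbor of $u$ lies in $V\setminus\{u\}$ and is covered by some set of $\mathcal{C}$. The key step is a pigeonhole argument: the $|N(u)|$ neighbors of $u$ can meet at most $|N(u)|$ of the $k$ sets, and since the hypothesis gives $|N(u)|<k$, at least one set $I'\in\mathcal{C}$ contains no neighbor of $u$.

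Given such an $I'$, I would verify that $I'\cup\{u\}$ is again independent in $G$: the set $I'$ was independent and $u$ has no neighbor inside $I'$ by the choice of $I'$, so no edge is introduced. Replacing $I'$ by $I'\cup\{u\}$ yields $\mathcal{C}'=(\mathcal{C}\setminus\{I'\})\cup\{I'\cup\{u\}\}$. A short check confirms $\mathcal{C}'$ is a genuine independent set cover of $G$: the sets stay pairwise disjoint because $u$ belonged to none of them, and their union now contains $u$, hence equals $V$. Since $\mathcal{C}'$ still comprises exactly $k$ sets, we get $\chi(G)\le k$, which combined with the first direction gives $\chi(G)=k=\chi(H)$ and certifies $\mathcal{C}'\in ic(G)$.

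The argument is short, and the only place that demands care is the strict inequality in the hypothesis: the pigeonhole guarantee of a neighbor-free class relies on $|N(u)|$ being strictly smaller than the number $k$ of available color classes. I would therefore make sure this strictness is invoked explicitly, since the weak inequality $|N(u)|\le k$ would not force the existence of an empty-of-neighbors set and the reduction could fail.
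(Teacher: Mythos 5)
Your proof is correct and follows essentially the same route as the paper's: both take an arbitrary minimum independent set cover of $G[V\setminus\{u\}]$, use the pigeonhole bound $|N(u)|<\chi(G[V\setminus\{u\}])$ to find a class $I'$ containing no neighbor of $u$, insert $u$ into it, and conclude equality of the chromatic numbers from the resulting same-size cover. Your write-up is in fact slightly more careful than the paper's (making the pigeonhole step and the disjointness check explicit), but there is no substantive difference in approach.
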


\begin{proof}

Let $ G' = G[V\setminus \{u\}] $ and $\mathcal{C}\in ic(G')$. Given that $ G' $ is a subgraph of $ G $, we have $ \chi(G') \leq \chi(G) $ and $ N(u) \subseteq G' $.

Since $|\mathcal{C}|=\chi(G')>|N(u)|$, there exists at least one independent set $I'\in \mathcal{C}$ where $\forall v\in I',\{u,v\} \notin E$. Thus, $I' \cup \{u\}$ is still an independent set. Furthermore, we can prove that $\mathcal{C}'=(\mathcal{C}\setminus \{I'\})\cup\{I'\cup\{u\}\}$ is an independent set cover of $G$, because $\{I'\cup\{u\}\}$ is an independent set
and $\mathcal{C}'$ includes the vertex of $(V(G') \setminus I')\cup I'\cup\{u\} = V(G')\cup \{u\} = V(G)$.  

By the construction of $\mathcal{C}'$, we get $|\mathcal{C}'|=|(\mathcal{C}\setminus \{I'\})\cup\{I'\cup\{u\}\}|=(|\mathcal{C}|-1)+1=|\mathcal{C}|=\chi(G')$. In addition, because $\mathcal{C}'$ is an independent set cover of G, $|\mathcal{C}'| \geq \chi(G)$. Therefore, we find that $|\mathcal{C}'|=\chi(G')\leq \chi(G) \leq |\mathcal{C}'|$, so $\chi(G')= \chi(G)$ and $\mathcal{C}' \in ic(G)$.

To conclude, we have $ \chi(G) = \chi(G[V - \{u\}]) $.
\end{proof}

The lemma asks for the exact chromatic number. But it is difficult to determine this number. So we can apply some relaxation to the constraints to derive reduction rules that can be used practically.

\begin{lemma}
\label{degre_reduction}
Let $G=(V,E)$ be a graph, $u \in V$ a vertex, and let $lb$ be a lower bound on the chromatic number of $G$. If $|N(u)| < lb$, then
\[
\chi(G) = \max\{ lb, \chi(G[V \setminus \{u\}]) \}.  
\]
Moreover, for every independent set cover $\mathcal{C} \in ic(G[V \setminus \{u\}])$:
if there exists an independent set $I' \in \mathcal{C}$ such that $I' \cup \{u\}$ is an independent set in $G$, then
\[
\mathcal{C}' = (\mathcal{C} \setminus \{I'\}) \cup \{ I' \cup \{u\} \} \in ic(G);  
\]
otherwise,
\[
\mathcal{C}' = \mathcal{C} \cup \{\{u\}\} \in ic(G).  
\]
\end{lemma}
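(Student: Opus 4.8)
The plan is to mirror the structure of the preceding lemma but to track the lower bound $lb$ explicitly, since the relaxed hypothesis $|N(u)| < lb$ (in place of $|N(u)| < \chi(G[V\setminus\{u\}])$) no longer guarantees that a minimum cover of $G' := G[V\setminus\{u\}]$ always leaves a color class free to absorb $u$. I would first dispose of the easy direction $\chi(G) \ge \max\{lb, \chi(G')\}$: the inequality $lb \le \chi(G)$ is given by hypothesis, and $\chi(G') \le \chi(G)$ because $G'$ is an induced subgraph of $G$. The remaining work is the reverse inequality together with the two explicit cover constructions.

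Next I would fix an arbitrary $\mathcal{C} \in ic(G')$, so that $|\mathcal{C}| = \chi(G')$, and split into the two cases of the statement. In the first case some $I' \in \mathcal{C}$ is disjoint from $N(u)$; then $I' \cup \{u\}$ is independent in $G$, and $\mathcal{C}' = (\mathcal{C}\setminus\{I'\})\cup\{I'\cup\{u\}\}$ covers $V$ with $|\mathcal{C}'| = \chi(G')$ (the verification that it is a valid independent set cover is identical to that of the previous lemma). Hence $\chi(G) \le \chi(G')$, which forces $\chi(G) = \chi(G')$; since $lb \le \chi(G) = \chi(G')$, we get $\max\{lb,\chi(G')\} = \chi(G') = \chi(G)$ and $\mathcal{C}' \in ic(G)$.

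The second case is where the real content lies. If no $I \in \mathcal{C}$ can absorb $u$, then every class $I \in \mathcal{C}$ meets $N(u)$; since the $\chi(G')$ classes are pairwise disjoint, a pigeonhole argument yields $\chi(G') \le |N(u)| < lb$, so $\max\{lb,\chi(G')\} = lb$. Taking $\mathcal{C}' = \mathcal{C}\cup\{\{u\}\}$ produces a valid cover of $G$ with $|\mathcal{C}'| = \chi(G') + 1 \le |N(u)| + 1 \le lb$, whence $\chi(G) \le lb$ and, together with the easy direction, $\chi(G) = lb = \max\{lb,\chi(G')\}$.

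The main obstacle, and the step I would be most careful with, is certifying that this $\mathcal{C}'$ is genuinely minimum rather than merely feasible. Here I would close a sandwich on $\chi(G')$: feasibility of $\mathcal{C}'$ gives $\chi(G) \le \chi(G')+1$, and since $\chi(G) = lb$ this reads $\chi(G') \ge lb - 1$; combined with $\chi(G') \le |N(u)| \le lb - 1$ (using integrality of $|N(u)| < lb$), I obtain $\chi(G') = lb - 1$, so $|\mathcal{C}'| = lb = \chi(G)$ and therefore $\mathcal{C}' \in ic(G)$. This interplay between the pigeonhole upper bound and the feasibility lower bound is exactly what pins the cover size down to the chromatic number, and it is the only place where the relaxation to a general lower bound (rather than the exact $\chi(G')$ of the previous lemma) calls for the extra arithmetic.
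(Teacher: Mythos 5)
Your proof is correct, and it takes a genuinely different (and in one respect more complete) route than the paper's. The paper's proof splits on the \emph{value} of $\chi(G')$, where $G'=G[V\setminus\{u\}]$: it first observes that deleting one vertex lowers the chromatic number by at most one, so $\chi(G')\ge lb-1$; if $\chi(G')\ge lb$ it invokes Lemma~1 (whose hypothesis $|N(u)|<\chi(G')$ is then satisfied) to conclude $\chi(G)=\chi(G')$, and otherwise $\chi(G')=lb-1$ forces $\chi(G)=lb$ via the sandwich $lb\le\chi(G)\le\chi(G')+1$. You instead split on the \emph{combinatorial structure} of a fixed minimum cover $\mathcal{C}$ of $G'$ --- whether some class can absorb $u$ --- which mirrors the two branches of the statement itself, and you derive the numerical facts as consequences: your pigeonhole bound $\chi(G')\le|N(u)|<lb$ in the no-absorption case is exactly the contrapositive of the pigeonhole inside Lemma~1's proof. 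The two case splits are in fact equivalent (an absorbing class exists if and only if $\chi(G')\ge lb$, since an absorbing class would give a cover of $G$ of size $\chi(G')$, impossible when $\chi(G')<lb\le\chi(G)$), so the numerical content is the same. What your organization buys is completeness: the paper's proof establishes only the equality $\chi(G)=\max\{lb,\chi(G')\}$, inheriting the cover construction from Lemma~1 in the first case and never proving the ``otherwise'' claim that $\mathcal{C}\cup\{\{u\}\}\in ic(G)$, whereas your final sandwich argument ($\chi(G')=lb-1$, hence $|\mathcal{C}\cup\{\{u\}\}|=lb=\chi(G)$) certifies minimality of that cover explicitly. The paper's version is shorter because it delegates the first case wholesale to Lemma~1; yours is self-contained and actually proves the full statement.
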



\begin{proof}
    
    Let $G' = G[V \setminus \{u\}]$, and let $lb$ be a valid lower bound on $\chi(G)$.  Since removing a single vertex can reduce the chromatic number by at most one, we have $\chi(G) - 1 \leq \chi(G') \leq \chi(G)$. Therefore, $\chi(G') \geq lb - 1$. 
    
    If $\chi(G') \geq lb$, according to Lemma 1, since  $|N(u)| < lb \leq \chi(G[V \setminus \{u\}])$, we obtain $\chi(G) = \chi(G[V - \{u\}])$. Otherwise, $\chi(G') = lb - 1$, hence $\chi(G) = lb = \chi(G') + 1$. In conclusion, $\chi(G) = \max(lb, \chi(G[V \setminus \{u\}]))$.

\end{proof}

Based on the lemma, it is easy to propose the \textit{degree reduction rule} -- When the degree of a vertex $u$ is less than the lower bound $lb$ of the chromatic number of the given graph, we can remove the vertex $u$ from the graph. 
 Degree reduction is one of the most important components in our algorithm.
 As discovered in further experiments, most vertices are removed during the degree reduction procedure. 
 Actually, degree reduction is the only reduction rule in the algorithm in \cite{lin2017reduction}.
 Nevertheless, to use degree reduction, it is necessary to find a sufficiently tight lower bound $lb$ beforehand. 
 


\subsection{Dominate Reduction}

\begin{definition}
In $G=(V,E)$, a vertex $v \in V$ dominates a vertex $u \in V$ if all neighbors of $u$ are also neighbors of $v$ (i.e. $N(u) \subseteq N(v)$). \end{definition}

Note that if we say $v$ dominates $u$, $v$ and $u$ must not be adjacent in the graph.

\begin{lemma}
\label{dominatereduction}
Let $G=(V,E)$ be a graph and let $u,v \in V$ be two vertices such that $v$ dominates $u$. Then,
$$
\chi(G) = \chi(G[V \setminus \{u\}]).  
$$
Moreover, for every independent set cover $\mathcal{C} \in ic(G[V \setminus \{u\}])$, there exists an independent set $I_v \in \mathcal{C}$ with $v \in I_v$ such that $I_v \cup \{u\}$ is also an independent set in $G$. In this case, a valid independent set cover of $G$ can be obtained as
$$
\mathcal{C}' = (\mathcal{C} \setminus \{I_v\}) \cup \{I_v \cup \{u\}\} \in ic(G).  
$$
\end{lemma}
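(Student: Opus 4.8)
The plan is to mirror the sandwich argument used in Lemma~1: establish $\chi(G[V\setminus\{u\}])\le\chi(G)$ for free and then lift a minimum cover of the reduced graph back to $G$ without increasing its size. The first inequality is immediate, since $G'=G[V\setminus\{u\}]$ is an induced subgraph of $G$ and deleting a vertex can only lower or preserve the chromatic number. All the real content lies in the reverse direction, for which I would take an arbitrary $\mathcal{C}\in ic(G')$ and exhibit a cover of $G$ of the same cardinality $|\mathcal{C}|=\chi(G')$.

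First I would locate the block that will absorb $u$. Because $v$ dominates $u$, the remark following Definition~1 guarantees that $u$ and $v$ are non-adjacent, so in particular $v\neq u$ and $v\in V(G')$; hence $v$ lies in exactly one block $I_v\in\mathcal{C}$. The key step is then to verify that $I_v\cup\{u\}$ is independent in $G$. Suppose some $w\in I_v$ were adjacent to $u$. If $w=v$ this contradicts the non-adjacency of the dominating pair; if $w\neq v$ then $w\in N(u)\subseteq N(v)$ forces $w$ to be adjacent to $v$, contradicting the independence of $I_v$. Thus no vertex of $I_v$ is adjacent to $u$, and $I_v\cup\{u\}$ is independent. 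This is the one place where the containment $N(u)\subseteq N(v)$ is used, and it is the main obstacle in the sense that the direction of the inclusion matters: it is precisely the neighbors of $u$ that must be shielded by those of $v$, so the hypothesis cannot be weakened to the reverse containment.

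With the independence of $I_v\cup\{u\}$ in hand, the remainder is bookkeeping. I would set $\mathcal{C}'=(\mathcal{C}\setminus\{I_v\})\cup\{I_v\cup\{u\}\}$ and check that it is a valid independent set cover of $G$: every block other than $I_v\cup\{u\}$ is an independent set of $G'$ and therefore of $G$, since the edges among $V\setminus\{u\}$ coincide in $G$ and $G'$; the blocks remain pairwise disjoint because $u\notin V(G')$; and their union is $(V\setminus\{u\})\cup\{u\}=V$. Hence $\mathcal{C}'$ colors $G$ with $|\mathcal{C}'|=|\mathcal{C}|=\chi(G')$ colors, giving $\chi(G)\le\chi(G')$. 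Combining this with the trivial inequality yields $\chi(G)=\chi(G')$, and since $|\mathcal{C}'|=\chi(G')=\chi(G)$ the cover $\mathcal{C}'$ is in fact minimum, so $\mathcal{C}'\in ic(G)$, which simultaneously establishes the existence-and-lifting claim asserted in the statement.
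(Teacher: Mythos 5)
Your proposal is correct and follows essentially the same route as the paper's own proof: the trivial inequality $\chi(G[V\setminus\{u\}])\le\chi(G)$, lifting a minimum cover by absorbing $u$ into the block $I_v$ containing $v$, and the counting argument that pins down equality and $\mathcal{C}'\in ic(G)$. The only difference is that you spell out the case analysis ($w=v$ versus $w\neq v$) showing $I_v\cup\{u\}$ is independent, which the paper merely asserts from $N(u)\subseteq N(v)$ and $\{u,v\}\notin E$; this is a welcome addition, not a deviation.
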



\begin{proof}
Let $ G' = G[V\setminus \{u\}] $ and $\mathcal{C}\in ic(G')$. 
Since $ G' $ is a subgraph of $ G $, we have $ \chi(G') \leq \chi(G)$. 
Consider the vertex $v \in G'$ and let $I_v \in \mathcal{C}$ be the independent set that contains $v$. 
Because $N(u) \subseteq N(v)$ and $(u,v)\notin E$ , $I_v \cup \{u\}$ is still an independent set. First, $\mathcal{C}'=(\mathcal{C}\setminus \{I_v\})\cup\{I_v\cup\{u\}\}$ is an independent set cover of $G$, since $\cup\{I_v\cup\{u\}\}$ is an independent set and $\mathcal{C}'$ includes the vertex of $(V(G') \setminus I_v)\cup I_v\cup\{u\} = V(G')\cup \{u\} = V(G)$. 
Because $\mathcal{C}'$ is an independent set cover of G, $|\mathcal{C}'| \geq \chi(G)$. 
Second, by the construction of $\mathcal{C}'$, we get $|\mathcal{C}'|=|(\mathcal{C}\setminus I_v)\cup\{I_v\cup\{u\}\}|=(|\mathcal{C}|-1)+1=|\mathcal{C}|=\chi(G')$.
Since $\chi(G') \leq \chi(G)$ and $\chi(G) \leq |\mathcal{C}'| = \chi(G')$, we obtain $\chi(G) = \chi(G')$ and $\mathcal{C}' \in ic(G)$. 

\end{proof}

Based on the above lemma, we obtain the \textit{domination reduction} algorithm -- When the neighborhood of a vertex $u$ is a subset of the neighborhood of a vertex $v$, we remove vertex $u$ from the graph. 

Note that, if two vertices are mutually dominated, i.e., $N(u)=N(v)$, then we can only reduce either $u$ or $v$ arbitrarily, rather than both of them.
For example, consider a graph that is a chain of length three $G = (V, E)$, where $V = \{1, 2, 3\}$ and $E = \{\{1, 2\}, \{2, 3\}\}$. Clearly, $N(1) = N(3) = \{2\}$. This satisfies both $N(1) \subseteq N(3)$ and $N(3) \subseteq N(1)$. The chromatic number of the original graph is 2. If we reduce both vertices 1 and 3 simultaneously, the resulting chromatic number would incorrectly be 1. 

\subsection{Complement Crown Reduction}

\begin{definition}
Given a graph $G=(V,E)$ and a pair of disjoint vertex sets $(H,I)$ where $H\subseteq V$ and $I\subseteq V$ in $G=(V,E)$, if 
\begin{itemize}
    \item $I$ is an independent set,
    \item $N(I) = H$ and,
    \item there exists a bipartite matching $M$ between $H$ and $I$ of size $|H|$,    
\end{itemize}
   then $(H,I)$ is called a crown structure in $G$. Specifically, $H$ is called a head and $I$ a crown.
\end{definition}


\textit{Crown reduction} is a powerful reduction technique in problems such as vertex cover  \cite{downey2012parameterized} and clique cover \cite{strash2022effective}.
Because the graph coloring problem in $G$ is equivalent to the clique cover problem in its complement graph $\overline{G}$, it is natural to translate any reduction rule of clique cover to that of graph coloring in the complement graph. 
Therefore, we directly depict the (complement) crown reduction for graph coloring.
Let us first define the complement crown structure in a graph $G$.

\begin{definition}[Complement Crown]
Given a graph $G=(V,E)$ and a pair of disjoint vertex sets $(B,C)$ where $B\subseteq V$ and $C\subseteq V$,
\begin{itemize}
    \item if $C$ is a clique,
    \item $\overline{N}(C) = B$ (recall that $\overline{N}(C)$ is the union of neighbors of vertex in $C$ in the complement graph $\overline{G}$) and,
    \item  there exists a bipartite matching $M$ in $\overline{G}$ between $B$ and $C$ of size $|B|$
\end{itemize}
then we call the pair $(B,C)$ a complement crown structure in $G$. Specifically, $B$ is the complement head and $C$ is the complement crown.
\end{definition}

Note that $M$ is a family of non-adjacent vertex pairs in $G$ (so that it is a matching in $\overline{G}$). 
The complete crown reduction for graph coloring is based on this complement crown structure.


\begin{lemma}
Let $G = (V, E)$ be a graph and let $(B,C)$ be a complement crown in $G$. Suppose $M$ is a bipartite matching between $B$ and $C$ in the complement graph $\overline{G}$. Then,
$$
\chi(G) = \chi\!\left(G[V \setminus (B \cup C)]\right) + |C|.  
$$
Moreover, let $C' \subseteq C$ denote the subset of vertices in $C$ that are unmatched in $M$, i.e.,$ C' = \{\,u \in C \mid \nexists (u,p) \in M \,\}.$
For every independent set cover $\mathcal{C} \in ic(G[V \setminus (B \cup C)])$, a valid independent set cover of $G$ can be constructed as
$$
\mathcal{C}' = \mathcal{C} \cup \{\,\{u,v\} \mid (u,v) \in M \,\} \cup \{\{u\} \mid u \in C'\},  
$$
such that $\mathcal{C}' \in ic(G)$.
\end{lemma}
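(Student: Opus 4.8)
The plan is to establish the identity by a two-sided bound, $\chi(G) \le \chi(G[V \setminus (B \cup C)]) + |C|$ and $\chi(G) \ge \chi(G[V \setminus (B \cup C)]) + |C|$, and to read off membership $\mathcal{C}' \in ic(G)$ once both directions are in place. Throughout I would work directly in $G$ rather than passing to $\overline{G}$, and the single structural fact I would extract first is the meaning of the hypothesis $\overline{N}(C) = B$: a vertex $w \notin B \cup C$ cannot be a neighbor of any vertex of $C$ in $\overline{G}$, hence $w$ is adjacent in $G$ to every vertex of $C$. Equivalently, no independent set of $G$ may simultaneously contain a vertex of $C$ and a vertex of $V \setminus (B \cup C)$. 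I would also record that $M$ is a matching in $\overline{G}$, so each pair $(u,v) \in M$ is a non-edge of $G$, i.e.\ an independent set of size two in $G$.

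For the upper bound I would simply verify that the proposed $\mathcal{C}'$ is a feasible independent set cover of $G$ and count it. Feasibility has three parts: the members of $\mathcal{C}$ are independent in the induced subgraph and therefore in $G$; each pair $\{u,v\}$ with $(u,v) \in M$ is independent in $G$ because $M$ lives in $\overline{G}$; and the singletons are trivially independent. For coverage, since $|M| = |B|$ the matching saturates $B$ and covers $|B|$ vertices of $C$, the singletons over $C'$ cover the remaining $|C| - |B|$ vertices of $C$, and $\mathcal{C}$ covers $V \setminus (B \cup C)$; together these exhaust $V$. Counting gives $|\mathcal{C}'| = \chi(G[V \setminus (B \cup C)]) + |B| + (|C| - |B|) = \chi(G[V \setminus (B \cup C)]) + |C|$, whence $\chi(G) \le \chi(G[V \setminus (B \cup C)]) + |C|$. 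This direction is essentially bookkeeping; the matching is exactly what lets $B$ be absorbed into $C$ without spending extra colors.

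The lower bound is the crux of the argument. I would take any optimal cover $\mathcal{D} \in ic(G)$ and focus on the sets that meet $C$. Because $C$ is a clique, its $|C|$ vertices lie in $|C|$ pairwise distinct members of $\mathcal{D}$; call them $D_1, \dots, D_{|C|}$ with $c_i \in D_i \cap C$. The key step is to show each $D_i$ is confined to $B \cup C$: by the structural fact above, $c_i$ is adjacent in $G$ to every vertex of $V \setminus (B \cup C)$, so $D_i$ can contain no such vertex, and since $C$ is a clique it can contain no second vertex of $C$; hence $D_i \subseteq \{c_i\} \cup B$. Consequently $D_1, \dots, D_{|C|}$ cover no vertex of $V \setminus (B \cup C)$, so the remaining $|\mathcal{D}| - |C|$ sets, restricted to $V \setminus (B \cup C)$, form an independent set cover of $G[V \setminus (B \cup C)]$. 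This yields $\chi(G[V \setminus (B \cup C)]) \le |\mathcal{D}| - |C| = \chi(G) - |C|$, the desired inequality.

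Combining the two bounds gives the stated identity, and since the construction attains $|\mathcal{C}'| = \chi(G[V \setminus (B \cup C)]) + |C| = \chi(G)$, we conclude $\mathcal{C}' \in ic(G)$. I expect the main obstacle to be the confinement step in the lower bound: everything hinges on reading $\overline{N}(C) = B$ as ``every vertex outside $B \cup C$ is $G$-adjacent to all of $C$,'' which is what forbids the clique-covering sets from reaching into the residual graph. It is worth noting that the matching hypothesis is used only in the upper bound; the lower bound needs only that $C$ is a clique with $\overline{N}(C) = B$, so the matching is precisely the ingredient that makes the bound tight.
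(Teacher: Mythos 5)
Your proof is correct, and the upper-bound half coincides with the paper's: both verify that $\mathcal{C} \cup M \cup \{\{u\} \mid u \in C'\}$ is a feasible cover and count it as $\chi(G[V\setminus(B\cup C)]) + |C|$. Where you diverge is the lower bound. The paper passes through the intermediate induced subgraph $G[V\setminus B]$: after deleting $B$, every vertex of $C$ becomes universal (adjacent to all remaining vertices, since $\overline{N}(C)=B$ and $C$ is a clique), so every cover of $G[V\setminus B]$ must place each $C$-vertex in a singleton, giving $\chi(G[V\setminus B]) = \chi(G[V\setminus(B\cup C)]) + |C|$, and then monotonicity of $\chi$ under induced subgraphs finishes. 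You instead analyze an optimal cover $\mathcal{D}$ of $G$ itself and prove a confinement property: the $|C|$ distinct color classes meeting $C$ cannot reach $V\setminus(B\cup C)$, so the remaining $|\mathcal{D}|-|C|$ classes restrict to a cover of the residual graph. Both arguments hinge on the same structural fact ($\overline{N}(C)=B$ forbids a $C$-vertex from sharing a class with anything outside $B\cup C$), but the paper's route makes the counting trivial by forcing singletons, while yours works directly in $G$ and must allow the $C$-classes to spill into $B$ --- which they harmlessly may. Your closing observation that the matching is needed only for the upper bound (tightness) and not for the lower bound is accurate and holds for the paper's proof as well; it is a nice way to see what each hypothesis buys.
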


\begin{proof}
Let $ G' = G[V\setminus(B\cup C)] $ and $\mathcal{C}\in ic(G')$. 
Recall that $C' \subseteq C$ denotes the subset of vertices in $C$ that are unmatched in the matching $M$, i.e.,
$C' = \{ u \in C \mid \nexists (u,p) \in M \}$.
$C$ is a clique, so $\chi(G[C])=|C|$. 
Because $C$ is a subgraph of $B \cup C$, $\chi(G[B \cup C])\geq \chi(G[C]) = |C|$. 
We can construct $\mathcal{X} = M\cup \{\{u\}|u\in C'\}$, an independent set cover of $B\cup C$, and we have $\chi(G[B \cup C])\leq|\mathcal{X}|=|C|$.
Thus, $\chi(G[B \cup C]) = |C|$ and $\mathcal{X} \in ic(G[B \cup C])$.


\begin{itemize}
    \item $\chi(G) \leq \chi(G') + |C|$:
    
        $\mathcal{C} \cup \mathcal{X}$ is an independent set cover of G. Hence, $\chi(G)\leq |\mathcal{C} \cup \mathcal{X}|=|\mathcal{C}|+|\mathcal{X}|=\chi(G')+|C|$,
    \item $\chi(G) \geq \chi(G') + |C|$:

        Considering any independent set cover $Y$ of $G[V\setminus B]$, we have $\forall u\in C,\{u\}\in Y$. Therefore, $\chi(G[V\setminus B])=\chi(G')+|C|$. Because $G[V\setminus B]$ is a subgraph of $G$, we have $\chi(G)\geq \chi(G[V\setminus B])=\chi(G')+|C|$.
\end{itemize}

Thus, we conclude $\chi(G) = \chi(G[V \setminus (B \cup C)]) + |C|$.
\end{proof}

A crown structure can be identified in time $O(m\sqrt{n})$  in a graph with $n$ vertices and $m$ edges \cite{nemhauser1975vertex}.
If the input graph is sparse, i.e., assuming $m\approx O(n)$, then the edge number in the complement graph is about $O(n^2)$, the complexity of identifying the complement crown structure is $O(n^{2.5})$, which is almost impractical when  $n$ is large (e.g. $n$ is around 10,000).
Actually, even for the clique cover problem, the crown structure is not completely identified due to the prohibitive runtime \cite{strash2022effective}. 
Based on these observations,  we focus on efficiently checking for certain special cases of complement crown structure that can be easily recognized.
Specifically, we reduce the complement crown structure $(B, C)$ where $|C| \le 2$.  

\begin{lemma}
\label{crown1reduction}
Let $G = (V, E)$ be a graph and $u \in V$. If $|\overline{N}(u)| \leq 1$, then
$$
\chi(G) = \chi(G[N(u)]) + 1.  
$$
Moreover, for every independent set cover $\mathcal{C} \in ic(G[N(u)])$, the set $\overline{N}[u]$ is an independent set in $G$, and a valid independent set cover of $G$ can be obtained as
$$
\mathcal{C}' = \mathcal{C} \cup \{\overline{N}[u]\} \in ic(G).  
$$
\end{lemma}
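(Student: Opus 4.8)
The plan is to prove the two assertions in turn: the chromatic identity $\chi(G) = \chi(G[N(u)]) + 1$ and the membership $\mathcal{C}' \in ic(G)$. The structural fact I would start from is that the vertex set splits as a disjoint union $V = N(u) \cup \overline{N}[u]$, where the two parts are disjoint because $\overline{N}[u] = \{u\} \cup \overline{N}(u)$ gathers exactly the vertices that are \emph{not} neighbors of $u$. Hence deleting $\overline{N}[u]$ from $G$ leaves precisely $G[N(u)]$, which is what allows the two chromatic quantities to be compared.

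The first real step is to verify that $\overline{N}[u]$ is an independent set in $G$, and this is the only place the hypothesis $|\overline{N}(u)| \le 1$ is used. If $\overline{N}(u) = \emptyset$, then $\overline{N}[u] = \{u\}$ is trivially independent. If $\overline{N}(u) = \{w\}$, then $\overline{N}[u] = \{u, w\}$, and since $w$ is by definition a non-neighbor of $u$ we have $\{u, w\} \notin E$, so the set is again independent. I would stress that this is exactly where the cardinality bound is essential: with two or more non-neighbors they could be mutually adjacent, and $\overline{N}[u]$ would fail to be independent.

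With independence in hand, the upper bound and the explicit construction follow together. Given any $\mathcal{C} \in ic(G[N(u)])$, I would set $\mathcal{C}' = \mathcal{C} \cup \{\overline{N}[u]\}$. Every set of $\mathcal{C}$ is independent in $G$, $\overline{N}[u]$ is independent by the previous step, and together they cover $N(u) \cup \overline{N}[u] = V$, so $\mathcal{C}'$ is a valid independent set cover of $G$ of cardinality $\chi(G[N(u)]) + 1$; this yields $\chi(G) \le \chi(G[N(u)]) + 1$. For the matching lower bound I would take any optimal coloring of $G$ and examine the color class of $u$: since $u$ is adjacent to every vertex of $N(u)$, that color appears nowhere in $N(u)$, so restricting the coloring to $N(u)$ is a valid coloring of $G[N(u)]$ using at most $\chi(G) - 1$ colors, giving $\chi(G[N(u)]) \le \chi(G) - 1$. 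Combining the two inequalities gives the identity, and since $\mathcal{C}'$ attains $\chi(G)$ it is optimal, so $\mathcal{C}' \in ic(G)$.

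I would remark that, alternatively, this statement is exactly the special case $C = \{u\}$, $B = \overline{N}(u)$ of the general complement crown lemma: a singleton $C$ is trivially a clique, $\overline{N}(C) = \overline{N}(u) = B$, and $|\overline{N}(u)| \le 1$ is precisely the condition under which a size-$|B|$ matching between $B$ and $C$ exists in $\overline{G}$, so one could simply invoke that lemma. Either way the argument is short, and the only genuinely load-bearing point — rather than a true obstacle — is the careful independence check of $\overline{N}[u]$ that pins down why the hypothesis cannot be relaxed beyond $|\overline{N}(u)| \le 1$.
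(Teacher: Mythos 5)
Your proof is correct. It is worth noting that the paper never writes out a proof of this lemma at all: it is stated bare, implicitly as the $|C|=1$ specialization of the general complement crown lemma proved just before it, and the paper later remarks (in the ablation section) that it is also subsumed by the independent set reduction of Lemma~\ref{independentreduction}, since $|\overline{N}(u)|\le 1$ forces $\overline{N}[u]$ to be independent. Your direct argument therefore fills a gap rather than duplicating anything, and its structure is essentially the one the paper uses to prove Lemma~\ref{independentreduction}: the upper bound via the explicit cover $\mathcal{C}\cup\{\overline{N}[u]\}$ over the partition $V=N(u)\cup\overline{N}[u]$, and the lower bound from the fact that $u$'s color class cannot meet $N(u)$ (the paper phrases this as $\chi(G)\ge\chi(G[N[u]])=\chi(G[N(u)])+1$, you phrase it as restricting an optimal coloring of $G$ to $N(u)$ — the same argument). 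What your write-up adds beyond the paper's implicit treatment is exactly the right load-bearing step: the case check that $\overline{N}[u]$ is independent precisely because $|\overline{N}(u)|\le 1$, which is the hypothesis that both the crown specialization (where it guarantees the size-$|B|$ matching) and the Lemma~\ref{independentreduction} specialization (where it guarantees independence) consume. Your closing remark identifying $C=\{u\}$, $B=\overline{N}(u)$ as the crown instance is also accurate, including the observation that the matching condition holds exactly when $|\overline{N}(u)|\le 1$.
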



\begin{lemma}
\label{crown2reduction}
Let $G = (V, E)$ be a graph and $\{u, v\} \in E$. If $\overline{N}(u) = \overline{N}(v) = \{x, y\}$, then
$$
\chi(G) = \chi(G[N(u) \setminus \{v\}]) + 2 = \chi(G[N(u)]) + 1.  
$$
Moreover, for every $\mathcal{C}\in ic(G[N(u)])$, there exist two independent sets $\{u,x\},\{v,y\}$, and a valid independent set cover of $G$ can be obtained as
$$
\mathcal{C}' = \mathcal{C} \cup \{\{u, x\}, \{v, y\}\} \in ic(G).  
$$
\end{lemma}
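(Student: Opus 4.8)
The plan is to treat this as the $|C|=2$ instance of the general complement crown reduction and then translate the resulting identity into the stated form. First I would set $C=\{u,v\}$ and $B=\{x,y\}$ and check that $(B,C)$ is a complement crown. The set $C$ is a clique because $\{u,v\}\in E$. Since $u$ and $v$ are adjacent in $G$ they are nonadjacent in $\overline{G}$, so neither lies in the other's complement neighborhood, giving $\overline{N}(C)=\overline{N}(u)\cup\overline{N}(v)=\{x,y\}=B$. In $\overline{G}$ each of $x,y$ is adjacent to both $u$ and $v$, so the bipartite graph between $B$ and $C$ is complete and $M=\{(u,x),(v,y)\}$ is a matching of size $|B|=2$ that saturates $C$. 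The sets $B,C$ are disjoint, since $x,y$ are non-neighbors of $u$ whereas $v\in N(u)$ and $u\notin\overline{N}(u)$. The general complement crown lemma then gives $\chi(G)=\chi(G[V\setminus(B\cup C)])+|C|=\chi(G[V\setminus\{u,v,x,y\}])+2$. Because $\overline{N}(u)=\{x,y\}$ forces $N(u)=V\setminus\{u,x,y\}$, we have $N(u)\setminus\{v\}=V\setminus\{u,v,x,y\}=V\setminus(B\cup C)$, which is exactly the first claimed equality.

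For the second equality I would argue that $v$ is a universal vertex of $G[N(u)]$. From $\overline{N}(v)=\{x,y\}$ we get $N(v)=V\setminus\{v,x,y\}$, hence $N(v)\cap N(u)=V\setminus\{u,v,x,y\}=N(u)\setminus\{v\}$; thus inside $G[N(u)]$ the vertex $v$ is adjacent to every other vertex. A universal vertex always needs a private color: deleting $v$ and optimally coloring the remainder uses $\chi(G[N(u)\setminus\{v\}])$ colors, to which $v$ can be added with one fresh color, while conversely the color class of $v$ is disjoint from every other class. Hence $\chi(G[N(u)])=\chi(G[N(u)\setminus\{v\}])+1$, and combined with the first equality this yields $\chi(G[N(u)\setminus\{v\}])+2=\chi(G[N(u)])+1$.

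For the cover, the sets $\{u,x\}$ and $\{v,y\}$ are independent because $x\in\overline{N}(u)$ and $y\in\overline{N}(v)$ imply $\{u,x\}\notin E$ and $\{v,y\}\notin E$. The \emph{delicate step}, which I expect to be the main obstacle, is that a cover $\mathcal{C}\in ic(G[N(u)])$ already assigns a color class to $v$, so one cannot literally adjoin $\{v,y\}$ without covering $v$ twice. The universality of $v$ resolves this: in any proper coloring of $G[N(u)]$ the class of $v$ must be the singleton $\{v\}$, so $\{v\}\in\mathcal{C}$. I would therefore build $\mathcal{C}'$ by replacing this singleton with $\{v,y\}$ and adjoining $\{u,x\}$, i.e. $\mathcal{C}'=(\mathcal{C}\setminus\{\{v\}\})\cup\{\{u,x\},\{v,y\}\}$ (equivalently, apply the crown lemma's construction to $\mathcal{C}\setminus\{\{v\}\}\in ic(G[N(u)\setminus\{v\}])$ with matching $M$ and empty unmatched part). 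The classes of $\mathcal{C}$ other than $\{v\}$ cover $N(u)\setminus\{v\}$, while $\{v,y\}$ and $\{u,x\}$ cover the remaining $v,y,u,x$, so $\mathcal{C}'$ partitions $V$ into independent sets; its cardinality is $(\chi(G[N(u)])-1)+2=\chi(G[N(u)])+1=\chi(G)$, proving $\mathcal{C}'\in ic(G)$.
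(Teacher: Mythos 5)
Your proof is correct, and it takes the route the paper itself intends: the paper states this lemma \emph{without} an explicit proof, presenting it (together with the $|\overline{N}(u)|\leq 1$ case) as an immediate specialization of the general complement crown lemma to $C=\{u,v\}$, $B=\{x,y\}$. Your verification of the three crown conditions, the identification $N(u)\setminus\{v\}=V\setminus(B\cup C)$, and the universal-vertex argument giving $\chi(G[N(u)])=\chi(G[N(u)\setminus\{v\}])+1$ supply exactly the omitted details, and all of them check out.

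One point deserves emphasis: the ``delicate step'' you flag is a genuine imprecision in the lemma's statement, not a misreading on your part. Since $v\in N(u)$ and $v$ is universal in $G[N(u)]$ (as $N(v)\cap N(u)=N(u)\setminus\{v\}$), every $\mathcal{C}\in ic(G[N(u)])$ must contain the singleton class $\{v\}$. Hence the literal union $\mathcal{C}\cup\{\{u,x\},\{v,y\}\}$ contains both $\{v\}$ and $\{v,y\}$, so it covers $v$ twice (failing the paper's definition of a cover as a partition into pairwise disjoint independent sets) and has cardinality $\chi(G[N(u)])+2=\chi(G)+1$, one more than optimal. Your repair, $\mathcal{C}'=(\mathcal{C}\setminus\{\{v\}\})\cup\{\{u,x\},\{v,y\}\}$ --- equivalently, applying the crown construction to $\mathcal{C}\setminus\{\{v\}\}\in ic(G[N(u)\setminus\{v\}])$ with the matching $M=\{(u,x),(v,y)\}$ and empty unmatched part --- is exactly what the statement should say; alternatively the hypothesis should quantify over $\mathcal{C}\in ic(G[N(u)\setminus\{v\}])$, matching the first equality $\chi(G)=\chi(G[N(u)\setminus\{v\}])+2$.
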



It is possible to discuss more special cases of the crown structure, such as cases where $|C|$ is at least $3$. 
However, our preliminary tests suggest that we can already achieve an acceptable reduction in size with these limited cases. It also becomes less practical to detect and reduce these complement crown structures where $|C|$ is larger. 




\subsection{Independent Set Reduction}


Given an independent set cover $\mathcal{C}'$, a vertex $v$ must be covered by a unique independent set $I_v\in \mathcal{C}'$. Suppose that $\overline{N}(v)$ forms an independent set. Then, it is clear that moving all vertices of $\overline{N}(v)$ to $I$ does not change the fact that $\mathcal{C}'$ is an independent set cover and does not increase its cardinality $|\mathcal{C}'|$. Based on this, we have the following independent reduction rule.

\begin{lemma}
\label{independentreduction}
Let $G=(V,E)$ be a graph and $u \in V$. If $\overline{N}[u]$ is an independent set, then
$$
\chi(G) = \chi(G[N(u)]) + 1.  
$$
Moreover, for every independent set cover $\mathcal{C} \in ic(G[N(u)])$, a valid independent set cover of $G$ can be obtained as
$$
\mathcal{C}' = \mathcal{C} \cup \{\overline{N}[u]\} \in ic(G).  
$$
\end{lemma}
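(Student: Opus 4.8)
The plan is to establish the claimed identity by sandwiching $\chi(G)$ between $\chi(G[N(u)]) + 1$ from both sides, and then observe that the explicit cover $\mathcal{C}'$ attains the upper bound and is therefore minimum. The first thing I would record is the structural decomposition $V = N(u)\,\dot\cup\,\overline{N}[u]$: every vertex other than $u$ is either adjacent or non-adjacent to $u$ in $G$, so $N(u)$ and $\overline{N}[u] = \{u\}\cup\overline{N}(u)$ partition $V$. This makes $N(u)$ and $\overline{N}[u]$ complementary pieces, which is exactly what the two halves of the argument exploit.

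For the lower bound $\chi(G) \ge \chi(G[N(u)]) + 1$, I would note that this half needs no hypothesis at all. Take any minimum cover $\mathcal{D} \in ic(G)$ and let $I_u \in \mathcal{D}$ be the unique class containing $u$. Since $u$ is adjacent to every vertex of $N(u)$, independence of $I_u$ forces $I_u \cap N(u) = \emptyset$. Hence the remaining $|\mathcal{D}| - 1$ classes must already cover $N(u)$, and restricting them to $N(u)$ (discarding empty sets) yields an independent set cover of $G[N(u)]$ of size at most $|\mathcal{D}| - 1$. Thus $\chi(G[N(u)]) \le \chi(G) - 1$. Equivalently, $u$ is universal in $G[N[u]]$, so $\chi(G) \ge \chi(G[N[u]]) = \chi(G[N(u)]) + 1$.

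The hypothesis that $\overline{N}[u]$ is independent enters only in the matching upper bound. Given $\mathcal{C} \in ic(G[N(u)])$, I would form $\mathcal{C}' = \mathcal{C} \cup \{\overline{N}[u]\}$ and check three things: (i) each member of $\mathcal{C}$ is independent in $G$ because it is induced from $N(u)$; (ii) $\overline{N}[u]$ is independent in $G$ by assumption; and (iii) the union of all classes is $N(u) \cup \overline{N}[u] = V$, with $\overline{N}[u]$ disjoint from every class of $\mathcal{C}$ (which lives inside $N(u)$), so that $\mathcal{C}'$ is a genuine partition of $V$ into independent sets. Counting then gives $|\mathcal{C}'| = |\mathcal{C}| + 1 = \chi(G[N(u)]) + 1$, whence $\chi(G) \le \chi(G[N(u)]) + 1$.

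Combining the two bounds yields $\chi(G) = \chi(G[N(u)]) + 1$, and since $|\mathcal{C}'|$ equals this common value, $\mathcal{C}'$ is minimum, i.e. $\mathcal{C}' \in ic(G)$. I do not anticipate a genuine obstacle: the only point worth flagging is item (iii), namely confirming disjointness so that $\mathcal{C}'$ is a partition and the cardinality increments by exactly one. It is also worth remarking that this lemma subsumes Lemma~\ref{crown1reduction}: the condition $|\overline{N}(u)| \le 1$ is merely a special case guaranteeing that $\overline{N}[u]$ is independent, so the identical construction and bookkeeping apply here verbatim.
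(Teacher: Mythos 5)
Your proof is correct and follows essentially the same route as the paper's: a lower bound from the fact that $u$ is adjacent to all of $N(u)$ (so $\chi(G)\ge\chi(G[N[u]])=\chi(G[N(u)])+1$), and a matching upper bound from the explicit cover $\mathcal{C}'=\mathcal{C}\cup\{\overline{N}[u]\}$, which then must be minimum. Your write-up is somewhat more careful than the paper's (spelling out the partition $V=N(u)\,\dot\cup\,\overline{N}[u]$ and the disjointness check), but the underlying argument is identical.
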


\begin{proof}
Let $ G' = G[N(u)] $ and $\mathcal{C}\in ic(G')$. 
First, $\forall v\in N(u), \{u,v\} \in E$, so vertex $u$ cannot share one color with any node in $N(u)$, thus $\chi(G) \geq \chi(G[N[u]]) = chi(G')+1$.
Second, $\mathcal{C}'=\mathcal{C}\cup\{\overline{N}[u]\}$ is an independent set cover of $G$ because  $\overline{N}[u]$ is an independent set and $\mathcal{C}'$ includes the vertex of $N(u)\cup\overline{N}[u] = V$. We can find that $\chi(G)\leq|\mathcal{C}'|=\chi(G')+1$. 

Thus, we conclude that $\chi(G)=\chi(G')+1$, and therefore $\mathcal{C}\cup\{\overline{N}[u]\}\in ic(G)$.

\end{proof}

\section{A Reduction-Based Graph Coloring Algorithm - RECOL}

Based on the exploitation of reduction rules for the graph coloring problem, we propose RECOL, a reduction-based algorithm capable of solving instances with millions of vertices under one minute.
The main feature of the algorithm lies in the integration of graph reduction with fast lower- and upper-bound improvements within a unified framework.


\begin{algorithm}[H]
    \caption{The RECOL algorithm}
    \KwIn{The initial input graph $G=(V,E)$}
    \KwOut{The best found chromatic number of $G$}     
    RECOL($G=(V,E)$)\\
    \Begin{
    $ans\gets |V|, round\gets 0$\\
    \While{cutoff time is not met}
    {
        Reset $G$ as the original input graph
        $lb \gets 0,ub\gets|V|,usedcol\gets 0,round\gets round +1$\\
        \While{$lb < ub$ and cutoff time is not met}
        {
            $temp\_lb \gets \text{FindClique}(G,lb,usedcol)$\\
            \uIf{$round = 1$}{
                $temp\_ub\gets\text{Degenarcy}(G,ub,usedcol)$\\
            }
            \Else{
                $temp\_ub \gets \text{Dsatur}(G, ub, usedcol)$\\
            }
            $G'=(V',E'), usedcolor \gets$ the graph and used color numbers after reducing $G$ by reductions rules in Lemma \ref{degre_reduction}, \ref{dominatereduction}, \ref{crown1reduction}, \ref{crown2reduction} and \ref{independentreduction} \\
            \If {$|V'|<|V|$ or $temp\_lb>lb$ or $temp\_ub<ub$}
            {
                $G\gets G'$,$lb\gets temp\_lb, ub\gets temp\_ub$          
            }\Else{
                $I \gets $FindIndepedentSet$(G, round)$\\
                $usedcol \gets usedcol+1$ \\
                $V \gets V \setminus I$\\
            }
        }
        \If{$ub <ans$}
        {
            $ans\gets ub$\\
        }
    }
    \Return{$ans$}\\
    }
\end{algorithm}

The core strategy of \textit{RECOL} is to iteratively apply reduction rules that simplify the graph while simultaneously maintaining information about optimal solutions. In lines 4–20, the algorithm executes a loop that begins with the original graph and searches for a solution by intertwining reductions with continuous refinement of both lower and upper bounds.
The lower bound is estimated through a randomized maximal clique procedure (\texttt{FindClique()} at line~7), whereas the upper bound is obtained by combining a linear-time degeneracy-ordering algorithm (\texttt{Degeneracy()} at line~9) with a DSATUR-like algorithm (\texttt{Dsatur()} at line~11). The use of two distinct upper-bounding procedures is motivated by efficiency: in the first iteration, when the graph remains large and unreduced, the fast greedy \texttt{Degeneracy()} algorithm is applied to quickly provide a valid upper bound, while \texttt{Dsatur()} is employed when the graph has been reduced.

The reduction rules from Lemmas~\ref{degre_reduction}, \ref{dominatereduction}, \ref{crown1reduction}, \ref{crown2reduction}, and \ref{independentreduction} are applied to further shrink the graph $G$, yielding a reduced instance $G'$. Table~\ref{tab:rules} summarizes the role of each reduction rule.

\begin{table}[ht]
\centering
\caption{How reduction rules work in RECOL.}
\label{tab:rules}
\scriptsize
\begin{tabular}{lcc}
\toprule
 Lemma & Rule & Reduced Structure\\
\midrule
\ref{degre_reduction} & Degree reduction & Vertices $u$ such that $d(u)<lb$\\
\ref{dominatereduction} & Dominate reduction & Vertices $u$ such that $\exists v\in V, N(u)\subseteq N(v)$\\
\ref{crown1reduction},\ref{crown2reduction} & Complement crown reduction & Vertices $B\cup C$ where $(B,C)$ forms a complement crown\\
\ref{independentreduction} & Independent set reduction & Vertices $\overline{N}[u]$ where $\overline{N}[u]$ is an independent set\\
\bottomrule
\end{tabular}
\end{table}

If the graph size decreases, or if either the lower or upper bound is improved, we update the working graph together with the bounds $lb$ and $ub$. When no further reductions are possible, an independent set is extracted, assigned a new color, and removed from the graph. Clearly, the removal of an independent set may enable additional reductions in the next round.
This cycle of reduction, bounding, and partial coloring is repeated until the time limit is reached, at which point the algorithm returns the best chromatic number found. By integrating effective reductions with dynamic bound management, \textit{RECOL} scales efficiently to graphs with millions of vertices within practical time limits.

It is worth noting that detecting domination relationships requires $O(nm)$ time. For this reason, the domination rule is applied only to graphs with at most 200 vertices. Similarly, identifying an independent set of the form $\overline{N}[u]$ is computationally demanding ($O(n^3)$). Hence, the independent set reduction is restricted to vertices with at most 10 non-neighbors, since vertices of low degree are unlikely to form large independent sets together with their non-neighbors.

\subsection{Estimate the Lower Bound of $\chi(G)$}

To reduce more vertices, we rely on strong lower bounds of the chromatic number. Classical examples include the clique number $\omega(G)$, since all vertices of a clique must receive distinct colors; the bound $|V|/\alpha(G)$, where $\alpha(G)$ is the independence number, as a coloring partitions $V$ into independent sets; and relaxation-based bounds derived from linear or semidefinite programming. As this work targets very large graphs, it is essential to compute such bounds within a very short time. For this purpose, we adopt the heuristic maximum clique algorithm proposed in \cite{DBLP:conf/ijcai/LinCLS17}.

The procedure is outlined in Algorithm~\ref{algorithm:findclique}. Starting from $\epsilon |V|$ randomly selected seed vertices ($\epsilon = 0.01$ in our experiments, following \cite{DBLP:conf/ijcai/LinCLS17}), the algorithm expands each seed into a maximal clique. The size of the resulting clique, combined with the number of colors already used, produces an updated lower bound.

\begin{algorithm}[H]
    \caption{The clique heuristic to find a lower bound of $\chi(G)$ \label{algorithm:findclique}}
    \KwIn{Graph $G=(V,E)$, initial lower bound of chromatic number $lb$, extra color number $usedcol$}
    \KwOut{new lower bound of chromatic number $lb'$}
    FindClique($G=(V,E),lb,usedcol$)\\
    \Begin{
     $S \gets $ randomly select $\epsilon|V|$ vertices from $V$\;
     \For{$u \in S$}
     {
        $Clique \gets \{u\}$\;
        $Candidate \gets N(u)$\;
        \While{$Cand \neq \emptyset$ and $|Clique| + |Cand| + usedcol > lb$}
        {
            Randomly select some vertices in $Candidate$ to $Sample$
            $v \gets { \underset{u\in Sample}{{\arg\max}}}(|N(u)\cap Candidate|)$\;
            $Clique \gets Clique \cup \{v\}$\;
            $Candidate \gets Candidate \cap N(v)$\;
        }
        \If{$|Clique| + usedcol > lb$}
        {
            $lb \gets |Clique| + usedcol$\;
        }
     }
     return $lb$\;
     }
\end{algorithm}



\subsection{Estimate the Upper Bound of $\chi(G)$}

It is known that any valid heuristic coloring provides an upper bound for the chromatic number.
In the initial round of the whole coloring algorithm, we  want the upper-bounding algorithm to be scalable and efficient. Hence, we exploit a simple degeneracy-based  algorithm which runs in linear time $O(|E|)$ \cite{matula1983smallest}, as shown in Algorithm~\ref{algorithm:upperbound_estimate}. The degeneracy-based coloring algorithm repeatedly selects the vertex with the fewest uncolored neighbors and assigns it the smallest feasible color. Each time a new color is introduced, the color count is updated, and the procedure terminates early if the number of colors used reaches the current upper bound. The final result is returned as the updated upper bound.

When scalability is not the most important feature, we resort to one of the most famous heuristic algorithms, the Dsatur by Daniel Brélaz \cite{brelaz1979new}, for upper bound estimation.
The procedure is also shown in the Algorithm~\ref{algorithm:upperbound_estimate}.
In general, the algorithm iteratively assigns a color to the vertex with the highest \emph{saturation degree}, 
where the saturation degree of a vertex $u$ is defined as the number of distinct colors already used in its neighborhood. 
When multiple vertices share the maximum saturation degree, one of them is chosen at random to increase diversity in the coloring process.


\begin{algorithm}[H]
    \caption{The procedures of estimating an upper bound of $\chi(G)$.
    \label{algorithm:upperbound_estimate}}
    \KwIn{graph $G=(V,E)$, initial lower bound of chromatic number $ub$, extra color number $usedcol$}
    \KwOut{new lower bound of chromatic number $ub'$}
    Degenarcy($G=(V,E),lb,ub,excol$)\\
    \Begin{
    $Candidate \gets V$, $colCnt\gets 0$ \\
    \While{$Candidate \neq \emptyset$}
     {
        $v \gets $ one vertex in $Cand$ with the fewest uncolored neighbors\\
        $col[v] \gets $ the smallest valid color for vertex $v$\\        $Candidate \gets Candidate \setminus \{v\}$\;
        \If{$col[v]>colCnt$}
        {
            $colCnt\gets colCnt +1$\\
            \If {$usedcol+colCnt \geq ub$}
            {
                return $ub$\\
            }
        }
     }
     return $usedcol+colCnt$\\
     }
     
    Dsatur($G=(V,E),lb,ub,excol$)\\
    \Begin{
    $Candidate \gets V$, $colCnt\gets 0$ \\
    Initiate the  saturation of each vertex as $0$.\\
     \While{$Candidate \neq \emptyset$}
     {
        $v \gets $ one vertex in $Cand$ with maximum saturation\\
        $col[v] \gets $ a randomly valid color for vertex $v$\\
        $Candidate \gets Candidate \setminus \{v\}$\;
        \If{$col[v]>colCnt$}
        {
            $colCnt\gets colCnt +1$\\
            \If {$usedcol+colCnt \geq ub$}
            {
                return $ub$\\
            }
        }
        update saturation of $N(v)$\\
     }
     return $usedcol+colCnt$\\
     }
     
\end{algorithm}

\subsection{Independent Set Extraction}

In \textit{RECOL}, when no further reductions are possible, an independent set is extracted from the current graph. The extraction procedure is described in Algorithm~\ref{algorithm:independent extraction}. Starting from the candidate vertices, the algorithm repeatedly selects the maximum-degree vertex. With probability $p$ (the skip parameter, randomly chosen in the range $(0,0.25]$ in our experiments), the vertex is skipped; otherwise, it is added to the set and its neighbors are removed. By varying $p$ across rounds, different independent sets are generated, thereby enlarging the search space and improving coloring quality. The extracted set is then assigned a new color and removed from the graph, which allows further bound updates and reductions.

\begin{algorithm}[H]
    \caption{Find an Independent Set}
    \label{algorithm:independent extraction}
    \KwIn{Graph $G=(V,E)$, current round $round$}
    \KwOut{An independent set $I$}
    FindIndepedentSet$(G, round)$\\
    \Begin{
        $p \gets \frac{round \bmod 25}{100}$ \tcp{skip probability parameter}
        $I \gets \emptyset$,$Candidate \gets V$\\        
        \While{$Candidate \neq \emptyset$}
        {
            $v\gets \underset{u\in Candidate}{{\arg\max}}(|N(u)|)$\\
            $Candidate \gets Candidate \setminus\{u\}$\\
            \If{$rand(0,1)>p $}
            {
                $I \gets I \cup \{v\}$\;
                $Candidate \gets Candidate\setminus N(v)$
            }
        }
        \Return{$I$}\\
    }
\end{algorithm}

\section{Experiment}
In this section, we empirically evaluate the performance of the proposed algorithm.



\subsection{Experimental Setup}  


All experiments were conducted on a server equipped with an Intel(R) Xeon(R) Platinum 8360Y CPU @ 2.40GHz and 1 TB of RAM, running Ubuntu 22.04.3 LTS. 
The code was developed in C++ and compiled with GCC 11.4.0 using the -O2 optimization flag. 

We compare our algorithm, \textsc{RECOL}, against three state-of-the-art heuristic algorithms mentioned in the related work.
\begin{itemize}
    \item \emph{FastColor}\cite{lin2017reduction}: One of the fastest heuristics designed for large-scale sparse graphs, combining reduction techniques with DSatur-based coloring.
    \item  \emph{GC-SLIM}\cite{schidler2023sat}: A SAT-boosted tabu search approach that found the best-known results on challenging benchmark instances.
    \item \emph{LS-WGCP}\cite{pan2025towards}: A very recent heuristic algorithm for vertex-weighted graph coloring. When the weights of all vertices are equal, the problem is equal to graph coloring.
\end{itemize}
To the best of our knowledge, these algorithms constitute strong baselines covering both long-term and short-term heuristic algorithms.

\subsection{Datasets}

We evaluate the algorithms on four representative graph collections.

\begin{itemize}
    \item \textbf{Network Repository} \cite{networkrepository} (139 graphs): A broad collection of real-world graphs from social, biological, and technological domains. These graphs are generally sparse and irregular, with some medium-sized instances showing richer clustering and attribute information.  
    
    \item \textbf{SNAP} \cite{snapnets} (51 graphs): Large-scale networks with up to millions of nodes and edges, mainly from social, citation, and web domains. Compared with the Network Repository, these graphs focus more on size and sparsity.  
    \item \textbf{10th DIMACS Challenge}\cite{bader2013graph,bader2018benchmarking} (134 graphs): A collection combining real-world and synthetic instances, including Erdős–Rényi graphs, with sizes ranging from small to extremely large (up to millions of vertices). 
    
    \item \textbf{2nd DIMACS Challenge}\cite{dimacs2} (87 graphs): Artificially generated graphs with relatively small size but very high edge density, often containing large embedded cliques. It includes 50 instances specifically designed for graph coloring and 37 for clique problems.
    
\end{itemize}


\begin{table}[ht]
\centering
\caption{Statistics of each graph dataset. }
\label{tab:statistoc_dataset}
\scriptsize
\begin{tabular}{l|c|cc|cc|c}
\toprule
\multirow{2}{*}{Dataset} & \multirow{2}{*}{\#Graphs} & \multicolumn{2}{c|}{\(|V|\)} & \multicolumn{2}{c|}{\(|E|\)} & \multirow{2}{*}{Avg  Density} \\
& & Range & Avg & Range & Avg &  \\
\midrule
Network & 139 & [34,58790782] & 880272.3 & [4,106349209] & 3951924.5 & 0.0246 \\
SNAP & 51 & [1912,9912553] & 778605.7 & [17239,42851237] & 2389405.6 & 0.0012 \\
DIMACS10 & 134 & [34,67108864] & 4325165.9 & [78,907090182] & 33946494.4 & 0.0088 \\
DIMACS2 & 87 & [125,4000] & 652.1 & [1472,5506380] & 596556.0 &  0.5591 \\
Total & 411 & [34,67108864] & 1804612.4 & [4,907090182] & 12763549.5 & 0.1297 \\
\bottomrule
\end{tabular}
\end{table}

We show statistical information of each of these datasets in Table \ref{tab:statistoc_dataset}.
Together, these datasets cover both real-world graphs and early artificially generated graphs. In particular, the first three datasets consist mainly of massive, sparse graphs derived from real applications.
These graphs are widely used as benchmark graphs in the existing work for evaluating graph coloring, as well as other graph algorithms. 

\subsection{Overall Comparison}


\begin{table}[ht]
\centering
\caption{Summary of experimental results over all datasets. 
Each entry is reported as ``best (average hits)''.}
\label{tab:stats}
\resizebox{\textwidth}{!}{
\begin{tabular}{l|cc|cc|cc|cc}
\toprule
Dataset & \multicolumn{2}{c|}{RECOL} & \multicolumn{2}{c|}{FastColor} & \multicolumn{2}{c|}{GC-SLIM} & \multicolumn{2}{c}{LS-WGCP} \\
 & min & strict min & min & strict min & min & strict min & min & strict min \\
\midrule
Network & 134 (9.19) & 19 (4.89) & 110 (9.57) & 1 (2.00) & 93 (9.92) & 1 (5.00) & 93 (8.89) & 3 (3.00) \\
SNAP & 51 (9.59) & 2 (1.50) & 49 (9.76) & 0 & 28 (10.00) & 0 & 32 (8.09) & 0 \\
DIMACS10 & 111 (9.33) & 7 (4.43) & 107 (9.44) & 2 (10.00) & 45 (10.00) & 0 & 61 (8.85) & 9 (6.89) \\
DIMACS2 & 44 (7.93) & 11 (4.45) & 7 (6.00) & 0 & 63 (9.86) & 33 (9.73) & 28 (9.18) & 6 (7.83) \\
\midrule
Total & 340 (9.13) & 39 (4.51) & 273 (9.46) & 3 (7.33) & 263 (9.91) & 141 (9.84) & 214 (8.80) & 24 (7.12) \\
\bottomrule
\end{tabular}
}
\end{table}


Due to the randomized nature of these algorithms, each algorithm was executed ten times on every graph with a time limit of 60 seconds, using random seeds ranging from 0 to 9. 
For each graph, we record both the best coloring solution obtained across the ten runs and the average solution quality. 

Due to the large number of graphs (139 + 51 + 134 + 87 = 411 instances in total), the complete detailed results for all graphs are provided in the appended file.
In Table~\ref{tab:stats}, we report the aggregated statistics for each dataset. Here, \emph{min} denotes the number of graphs on which an algorithm’s solution (best and average over 10 runs) is no worse than the others, while \emph{strict min} indicates the number of graphs where it strictly outperforms all competitors. 
Each entry in the table follows the format ``best (average hits),'' where ``best'' refers to the number of graphs that the corresponding algorithm achieves the minimum and strictly minimum coloring number, and ``average hits'' indicates the average number of runs (out of 10) in which the algorithm achieves the minimum and strictly minimum coloring number.

As shown in Table~\ref{tab:stats}, \textsc{RECOL} clearly achieves smaller coloring numbers than the other algorithms in the first three sets of benchmark graphs. It also outperforms \emph{FastColor}, which is perhaps the best-known algorithm specifically designed for solving large graphs under short-term time limits. However, for the DIMACS2 set, \textsc{GC-SLIM}, the SAT-based long-term search algorithm still achieves the best results within 1 minute.



We also present detailed results for graphs with more than 200,000 vertices in Tables~\ref{tab:samples1}–\ref{tab:samples3}. Each table lists one graph per row, reporting the number of vertices ($|V|$) and edges ($|E|$), followed by the results for each algorithm. For each algorithm, the columns indicate the best solution (\emph{Min}), the average solution quality across 10 runs (\emph{Avg}), and the number of runs achieving the minimum (\#Hit). Unique minima are highlighted in bold.

As shown, for these large graphs, \textsc{RECOL} demonstrates even clearer advantages: it is worse than \textsc{FastColor} on only 2 graphs and worse than \textsc{LS-WGCP} on only 4 graphs, while achieving strictly better solutions than all other algorithms on 16 graphs. These results further confirm the effectiveness of our algorithm on very large graphs.

{

\renewcommand{\arraystretch}{0.8}
\scriptsize
\begin{xltabular}{\textwidth}{lrr|XXX|XXX|XXX|XXX}
\caption{Detailed results on Network Repository ($|V|\geq2\times 10^5$).}\\
\label{tab:samples1}\\
\toprule
Graph & $|V|$ & $|E|$ & \multicolumn{3}{c|}{RECOL} & \multicolumn{3}{c|}{FastColor} & \multicolumn{3}{c|}{LS-WGCP} & \multicolumn{3}{c}{GC-SLIM}  \\
 & $\times 10^5$& $\times 10^5$& Min & Avg & \#Hit & Min & Avg & \#Hit & Min & Avg & \#Hit & Min & Avg & \#Hit  \\
\midrule
sc-pwtk & 2 & 56 & 30 & \textbf{30.2} & 8 & 36 & 36.0 & 10 & 39 & 42.5 & 1 & \textbf{29} & 32.5 & 5 \\
ca-d...2010 & 2 & 7 & 75 & 75.0 & 10 & 75 & 75.0 & 10 & 75 & 75.0 & 10 & 75 & 75.0 & 10 \\
ca-c...seer & 2 & 8 & 87 & 87.0 & 10 & 87 & 87.0 & 10 & 87 & 87.0 & 10 & 87 & 87.0 & 10 \\
ca-d...2012 & 3 & 10 & 114 & 114.0 & 10 & 114 & 114.0 & 10 & 114 & 114.0 & 10 & 114 & 114.0 & 10 \\
ca-M...iNet & 3 & 8 & 25 & 25.0 & 10 & 25 & 25.0 & 10 & 25 & 25.0 & 10 & 25 & 25.0 & 10 \\
soc-...lows & 4 & 7 & 6 & 6.0 & 10 & 6 & 6.0 & 10 & 6 & 6.3 & 7 & 8 & 8.0 & 10 \\
sc-msdoor & 4 & 93 & \textbf{31} & \textbf{31.0} & 10 & 35 & 35.0 & 10 & 38 & 40.2 & 1 & 35 & 35.0 & 6 \\
soc-...tube & 4 & 19 & \textbf{22} & \textbf{22.8} & 2 & 25 & 25.0 & 10 & 26 & 26.4 & 6 & - & - & - \\
web-...2004 & 5 & 71 & 432 & 432.0 & 10 & 432 & 432.0 & 10 & 432 & 432.0 & 10 & - & - & - \\
soc-flickr & 5 & 31 & \textbf{93} & \textbf{93.7} & 3 & 101 & 101.0 & 10 & 111 & 113.9 & 1 & - & - & - \\
soc-...ious & 5 & 13 & 21 & 21.0 & 10 & 21 & 21.0 & 10 & 21 & 21.0 & 10 & - & - & - \\
ca-c...dblp & 5 & 152 & 337 & 337.0 & 10 & 337 & 337.0 & 10 & 337 & 337.0 & 10 & - & - & - \\
soc-...uare & 6 & 32 & 31 & 31.0 & 10 & 31 & 31.0 & 10 & 35 & 36.6 & 1 & - & - & - \\
web-...Stan & 6 & 66 & 29 & 29.0 & 10 & 29 & 29.0 & 10 & 29 & 29.0 & 10 & 29 & 29.0 & 10 \\
soc-digg & 7 & 59 & \textbf{54} & \textbf{54.9} & 1 & 63 & 63.0 & 10 & 60 & 62.1 & 1 & - & - & - \\
sc-ldoor & 9 & 207 & \textbf{31} & \textbf{31.0} & 10 & 35 & 35.0 & 10 & 40 & 40.7 & 5 & - & - & - \\
ca-h...2009 & 10 & 563 & 2209 & 2209 & 10 & 2209 & 2209 & 2 & 2209 & 2209 & 10 & - & - & - \\
inf-...t-PA & 10 & 15 & 4 & 4.0 & 10 & 4 & 4.0 & 10 & 4 & 4.0 & 10 & - & - & - \\
rt-r...rawl & 11 & 22 & 13 & 13.0 & 10 & 13 & 13.0 & 10 & 13 & 13.0 & 10 & - & - & - \\
scc\_...rawl & 11 & 0 & 20 & 20.0 & 10 & 20 & 20.0 & 10 & 20 & 20.0 & 10 & - & - & - \\
soc-...snap & 11 & 29 & \textbf{23} & \textbf{23.0} & 10 & 25 & 25.0 & 10 & 28 & 28.2 & 8 & - & - & - \\
soc-lastfm & 11 & 45 & \textbf{19} & \textbf{19.6} & 4 & 21 & 21.0 & 10 & 22 & 22.1 & 9 & - & - & - \\
soc-pokec & 16 & 223 & 29 & 29.0 & 10 & 29 & 29.0 & 10 & 29 & 30.5 & 2 & - & - & - \\
tech...tter & 16 & 110 & 67 & \textbf{67.0} & 10 & 68 & 68.0 & 10 & 67 & 68.7 & 1 & - & - & - \\
web-...2009 & 18 & 45 & 31 & 31.0 & 10 & 31 & 31.0 & 10 & 31 & 31.0 & 10 & - & - & - \\
inf-...t-CA & 19 & 27 & 4 & 4.0 & 10 & 4 & 4.0 & 10 & 4 & 4.1 & 9 & - & - & - \\
soc-...ster & 25 & 79 & \textbf{34} & \textbf{34.7} & 3 & 36 & 36.0 & 10 & 36 & 36.7 & 5 & - & - & - \\
socf...anon & 29 & 209 & \textbf{24} & \textbf{24.5} & 5 & 25 & 26.8 & 2 & 36 & 38.9 & 1 & - & - & - \\
soc-orkut & 29 & 1063 & \textbf{66} & \textbf{66.5} & 1 & - & - & - & 101 & 101.4 & 6 & - & - & - \\
socf...anon & 30 & 236 & \textbf{26} & \textbf{26.3} & 7 & 28 & 28.0 & 10 & 43 & 43.9 & 3 & - & - & - \\
soc-...rnal & 40 & 279 & 214 & 214.0 & 10 & 214 & 214.0 & 10 & 214 & 214.0 & 10 & - & - & - \\
inf-...-usa & 239 & 288 & 4 & 4.0 & 10 & 4 & 4.0 & 10 & - & - & - & - & - & - \\
socf...-uni & 587 & 922 & \textbf{6} & \textbf{6.0} & 10 & 7 & 7.7 & 3 & - & - & - & - & - & - \\
\bottomrule
\end{xltabular}
}
{
\scriptsize
\renewcommand{\arraystretch}{0.8}
\begin{xltabular}{\textwidth}{lrr|XXX|XXX|XXX|XXX}
\caption{Detailed results on SNAP ($|V|\geq2\times 10^5$).}\\
\label{tab:samples2}\\
\toprule
Graph & $|V|$ & $|E|$ & \multicolumn{3}{c|}{RECOL} & \multicolumn{3}{c|}{Fastcolor} & \multicolumn{3}{c|}{LS-WGCP} & \multicolumn{3}{c}{GC-SLIM}  \\
 & $\times 10^5$& $\times 10^5$& Min & Avg & \#Hit & Min & Avg & \#Hit & Min & Avg & \#Hit & Min & Avg & \#Hit  \\
\midrule
Amazon0302 & 2 & 8 & 7 & 7.0 & 10 & 7 & 7.0 & 10 & 7 & 7.0 & 10 & 7 & 7.0 & 10 \\
Emai...uAll & 2 & 3 & 18 & \textbf{18.1} & 9 & 18 & 18.5 & 5 & 23 & 23.2 & 8 & 21 & 21.8 & 2 \\
web-...ford & 2 & 19 & 61 & 61.0 & 10 & 61 & 61.0 & 10 & 61 & 61.0 & 10 & 64 & 64.0 & 10 \\
web-...Dame & 3 & 10 & 155 & 155.0 & 10 & 155 & 155.0 & 10 & 155 & 155.0 & 10 & 155 & 155.0 & 10 \\
Amazon0312 & 4 & 23 & 11 & 11.0 & 10 & 11 & 11.0 & 10 & 11 & 11.1 & 9 & 12 & 12.0 & 10 \\
Amazon0601 & 4 & 24 & 11 & 11.0 & 10 & 11 & 11.0 & 10 & 11 & 11.1 & 9 & 12 & 12.0 & 10 \\
Amazon0505 & 4 & 24 & 11 & 11.0 & 10 & 11 & 11.0 & 10 & 11 & 11.4 & 6 & 12 & 12.0 & 10 \\
web-...Stan & 6 & 66 & 201 & 201.0 & 10 & 201 & 201.0 & 10 & 201 & 201.0 & 10 & - & - & - \\
web-Google & 9 & 43 & 44 & 44.0 & 10 & 44 & 44.0 & 10 & 44 & 44.0 & 10 & - & - & - \\
soc-...hips & 16 & 223 & 29 & 29.0 & 10 & 29 & 29.0 & 10 & 29 & 29.7 & 4 & - & - & - \\
WikiTalk & 23 & 46 & \textbf{48} & \textbf{48.8} & 2 & 49 & 49.0 & 10 & 62 & 63.5 & 2 & - & - & - \\
soc-...nal1 & 48 & 428 & 321 & 321.0 & 10 & 321 & 321.0 & 10 & 334 & 334.0 & 10 & - & - & - \\
cit-...ents & 60 & 165 & 11 & 11.0 & 10 & 11 & 11.0 & 10 & 16 & 17.5 & 2 & - & - & - \\
Cit-HepTh & 99 & 3 & 24 & 24.0 & 10 & 24 & 24.0 & 10 & 24 & 24.6 & 4 & - & - & - \\
Cit-HepPh & 99 & 4 & 19 & 19.0 & 10 & 19 & 19.0 & 10 & 20 & 22.1 & 1 & - & - & - \\
\bottomrule
\end{xltabular}

\begin{xltabular}{\textwidth}{lrr|XXX|XXX|XXX|XXX}
\caption{Detailed results on Dimacs10 ($|V|\geq2\times 10^5$).}\\
\label{tab:samples3}\\
\toprule
Graph & $|V|$ & $|E|$ & \multicolumn{3}{c|}{RECOL} & \multicolumn{3}{c|}{Fastcolor} & \multicolumn{3}{c|}{LS-WGCP} & \multicolumn{3}{c}{GC-SLIM}  \\
 & $\times 10^5$& $\times 10^5$& Min & Avg & \#Hit & Min & Avg & \#Hit & Min & Avg & \#Hit & Min & Avg & \#Hit  \\
\midrule
coAu...seer & 2 & 8 & 87 & 87.0 & 10 & 87 & 87.0 & 10 & 87 & 87.0 & 10 & 87 & 87.0 & 10 \\
kron...gn18 & 2 & 211 & \textbf{36} & \textbf{36.6} & 4 & 37 & 37.3 & 7 & 54 & 55.4 & 3 & 38 & 38.0 & 10 \\
rgg\_...8\_s0 & 2 & 28 & 15 & 15.0 & 10 & 15 & 15.0 & 10 & 15 & 15.0 & 10 & 15 & 15.0 & 10 \\
kron...gn18 & 2 & 211 & \textbf{36} & \textbf{36.6} & 4 & 37 & 37.1 & 9 & 57 & 58.5 & 3 & 38 & 38.0 & 10 \\
dela...\_n18 & 2 & 7 & 5 & 5.0 & 10 & 5 & 5.0 & 10 & \textbf{4} & \textbf{4.0} & 10 & 5 & 5.0 & 10 \\
cita...seer & 2 & 11 & 13 & 13.0 & 10 & 13 & 13.0 & 10 & 13 & 13.0 & 10 & 13 & 13.0 & 10 \\
coAu...DBLP & 2 & 9 & 115 & 115.0 & 10 & 115 & 115.0 & 10 & 115 & 115.0 & 10 & 115 & 115.0 & 10 \\
cnr-2000 & 3 & 27 & 84 & 84.0 & 10 & 84 & 84.0 & 10 & 84 & 84.0 & 10 & 84 & 84.0 & 10 \\
coPa...seer & 4 & 160 & 845 & 845.0 & 10 & 845 & 845.0 & 10 & 845 & 845.0 & 10 & - & - & - \\
af\_shell9 & 5 & 85 & 20 & 20.0 & 10 & 20 & 20.0 & 10 & 24 & 25.0 & 2 & - & - & - \\
kron...gn19 & 5 & 435 & 38 & \textbf{38.6} & 4 & 38 & 40.8 & 1 & 65 & 65.3 & 7 & - & - & - \\
rgg\_...9\_s0 & 5 & 65 & 8 & 8.0 & 10 & 8 & 8.0 & 10 & \textbf{7} & \textbf{7.5} & 5 & - & - & - \\
dela...\_n19 & 5 & 15 & 5 & 5.0 & 10 & 5 & 5.0 & 10 & \textbf{4} & \textbf{4.4} & 6 & - & - & - \\
kron...gn19 & 5 & 435 & \textbf{38} & \textbf{38.3} & 7 & 39 & 43.2 & 3 & 65 & 65.0 & 10 & - & - & - \\
coPa...DBLP & 5 & 152 & 337 & 337.0 & 10 & 337 & 337.0 & 10 & 337 & 337.0 & 10 & - & - & - \\
eu-2005 & 8 & 161 & 387 & 387.0 & 10 & 387 & 387.0 & 10 & 387 & 387.0 & 10 & - & - & - \\
audikw1 & 9 & 383 & \textbf{41} & \textbf{41.7} & 3 & 42 & 42.0 & 10 & 51 & 51.5 & 5 & - & - & - \\
ldoor & 9 & 227 & \textbf{32} & \textbf{32.0} & 10 & 33 & 33.0 & 10 & 42 & 43.0 & 1 & - & - & - \\
ecology2 & 9 & 19 & 2 & 2.0 & 10 & 2 & 2.0 & 10 & 4 & 4.0 & 10 & - & - & - \\
ecology1 & 10 & 19 & 2 & 2.0 & 10 & 2 & 2.0 & 10 & 4 & 4.0 & 10 & - & - & - \\
NACA0015 & 10 & 31 & 5 & 5.0 & 10 & 5 & 5.0 & 10 & 6 & 6.0 & 10 & - & - & - \\
dela...\_n20 & 10 & 31 & 5 & 5.0 & 10 & 5 & 5.0 & 10 & 5 & 5.0 & 10 & - & - & - \\
er-f...le20 & 10 & 109 & 8 & 8.0 & 10 & 8 & 8.0 & 10 & 10 & 10.4 & 6 & - & - & - \\
rgg\_...0\_s0 & 10 & 104 & 17 & 17.0 & 10 & 17 & 17.0 & 10 & 17 & 17.0 & 10 & - & - & - \\
kron...gn20 & 10 & 892 & - & - & - & \textbf{46} & \textbf{46.0} & 10 & - & - & - & - & - & - \\
thermal2 & 12 & 36 & 5 & 5.0 & 10 & 5 & 5.0 & 10 & 6 & 6.0 & 10 & - & - & - \\
in-2004 & 13 & 139 & 489 & 489.0 & 10 & 489 & 489.0 & 10 & 489 & 489.0 & 10 & - & - & - \\
belg....osm & 14 & 15 & 3 & 3.0 & 10 & 3 & 3.0 & 10 & 3 & 3.0 & 10 & - & - & - \\
af\_shell10 & 15 & 255 & 15 & 15.0 & 10 & 15 & 15.0 & 10 & 26 & 27.3 & 1 & - & - & - \\
G3\_circuit & 15 & 30 & 3 & 3.0 & 10 & 3 & 3.0 & 10 & 3 & 3.6 & 4 & - & - & - \\
kkt\_power & 20 & 64 & 11 & 11.0 & 10 & 11 & 11.0 & 10 & 11 & 11.0 & 10 & - & - & - \\
rgg\_...1\_s0 & 20 & 245 & 18 & 18.0 & 10 & 18 & 18.0 & 10 & 18 & 18.0 & 10 & - & - & - \\
er-f...le21 & 20 & 229 & 9 & 9.0 & 10 & 9 & 9.0 & 10 & 11 & 11.0 & 10 & - & - & - \\
dela...\_n21 & 20 & 62 & 5 & 5.0 & 10 & 5 & 5.0 & 10 & 5 & 6.2 & 4 & - & - & - \\
pack...b050 & 21 & 308 & \textbf{10} & \textbf{10.9} & 1 & 11 & 11.0 & 10 & 14 & 14.7 & 3 & - & - & - \\
neth....osm & 22 & 24 & 4 & 4.0 & 10 & 4 & 4.0 & 10 & 4 & 4.0 & 10 & - & - & - \\
M6 & 35 & 105 & 5 & 5.0 & 10 & 5 & 5.0 & 10 & 6 & 6.3 & 7 & - & - & - \\
nlpkkt120 & 35 & 466 & 2 & 2.0 & 10 & 2 & 2.0 & 10 & 8 & 8.1 & 9 & - & - & - \\
333SP & 37 & 111 & 5 & 5.0 & 10 & 5 & 5.0 & 10 & 7 & 7.0 & 10 & - & - & - \\
AS365 & 37 & 113 & 5 & 5.0 & 10 & 5 & 5.0 & 10 & 7 & 7.0 & 10 & - & - & - \\
vent...vel3 & 40 & 80 & 4 & 4.0 & 10 & 4 & 4.0 & 10 & 4 & 4.0 & 10 & - & - & - \\
NLR & 41 & 124 & 5 & 5.0 & 10 & 5 & 5.0 & 10 & 8 & 8.0 & 10 & - & - & - \\
er-f...le22 & 41 & 479 & 9 & 9.0 & 10 & 9 & 9.0 & 10 & 13 & 13.7 & 3 & - & - & - \\
dela...\_n22 & 41 & 125 & 5 & 5.0 & 10 & 5 & 5.0 & 10 & 8 & 8.0 & 10 & - & - & - \\
rgg\_...2\_s0 & 41 & 605 & 11 & 11.0 & 10 & 11 & 11.0 & 10 & 16 & 16.1 & 9 & - & - & - \\
huge...0000 & 45 & 68 & 3 & 3.0 & 10 & 3 & 3.0 & 10 & 4 & 4.0 & 10 & - & - & - \\
chan...b050 & 48 & 426 & 4 & 4.0 & 10 & 4 & 4.0 & 10 & 13 & 13.1 & 9 & - & - & - \\
cage15 & 51 & 470 & 12 & \textbf{12.6} & 4 & 12 & 12.7 & 3 & 18 & 18.0 & 10 & - & - & - \\
huge...0000 & 58 & 87 & 3 & 3.0 & 10 & 3 & 3.0 & 10 & 4 & 4.0 & 8 & - & - & - \\
huge...0010 & 65 & 98 & 3 & 3.0 & 10 & 3 & 3.0 & 10 & 4 & 4.0 & 5 & - & - & - \\
italy.osm & 66 & 70 & 3 & 3.0 & 10 & 3 & 3.0 & 10 & 3 & 4.0 & 2 & - & - & - \\
adaptive & 68 & 136 & 2 & 2.0 & 10 & 2 & 2.0 & 10 & - & - & - & - & - & - \\
huge...0020 & 71 & 106 & 3 & 3.0 & 10 & 3 & 3.0 & 10 & - & - & - & - & - & - \\
grea....osm & 77 & 81 & 3 & 3.0 & 10 & 3 & 3.0 & 10 & 4 & 4.9 & 1 & - & - & - \\
nlpkkt160 & 83 & 1105 & 2 & 2.0 & 10 & 2 & 2.0 & 10 & - & - & - & - & - & - \\
rgg\_...3\_s0 & 83 & 1119 & 19 & 19.0 & 10 & 19 & 19.0 & 10 & - & - & - & - & - & - \\
dela...\_n23 & 83 & 251 & 5 & 5.0 & 10 & 5 & 5.0 & 10 & - & - & - & - & - & - \\
er-f...le23 & 83 & 1003 & - & - & - & \textbf{11} & \textbf{11.0} & 10 & - & - & - & - & - & - \\
germ....osm & 115 & 123 & 4 & 4.0 & 10 & 4 & 4.0 & 10 & 5 & 5.0 & 7 & - & - & - \\
asia.osm & 119 & 127 & 4 & 4.0 & 10 & 4 & 4.0 & 10 & 5 & 5.0 & 10 & - & - & - \\
huge...0010 & 120 & 180 & 3 & 3.0 & 10 & 3 & 3.0 & 10 & - & - & - & - & - & - \\
road...tral & 140 & 169 & 4 & 4.0 & 10 & 4 & 4.0 & 10 & - & - & - & - & - & - \\
huge...0020 & 160 & 239 & 3 & 3.0 & 10 & 3 & 3.0 & 10 & - & - & - & - & - & - \\
nlpkkt200 & 162 & 2159 & 2 & 2.0 & 10 & 2 & 2.0 & 10 & - & - & - & - & - & - \\
dela...\_n24 & 167 & 503 & 5 & 5.0 & 10 & 5 & 5.0 & 10 & - & - & - & - & - & - \\
rgg\_...4\_s0 & 167 & 1802 & 21 & 21.0 & 10 & 21 & 21.0 & 10 & - & - & - & - & - & - \\
huge...0000 & 183 & 274 & 3 & 3.0 & 10 & 3 & 3.0 & 10 & - & - & - & - & - & - \\
uk-2002 & 185 & 2620 & 944 & 944.0 & 10 & 944 & 944.0 & 10 & - & - & - & - & - & - \\
huge...0010 & 194 & 291 & 3 & 3.0 & 10 & 3 & 3.0 & 10 & - & - & - & - & - & - \\
huge...0020 & 211 & 317 & 3 & 3.0 & 10 & 3 & 3.0 & 10 & - & - & - & - & - & - \\
road\_usa & 239 & 288 & 4 & 4.0 & 10 & 4 & 4.0 & 10 & - & - & - & - & - & - \\
nlpkkt240 & 279 & 3732 & 2 & 2.0 & 10 & 2 & 2.0 & 10 & - & - & - & - & - & - \\
europe.osm & 509 & 540 & 4 & 4.0 & 10 & 4 & 4.0 & 10 & - & - & - & - & - & - \\
\bottomrule
\end{xltabular}
}


To provide a more intuitive comparison among the algorithms, we plot line charts for each dataset in Figure~\ref{fig:placeholder}. The x-axis represents time, and the y-axis shows the number of graphs for which an algorithm achieves the best solution (among all algorithms) within 60 seconds. After investigating the code of \textsc{GC-SLIM}, we found that it does not report the time to reach each solution, so its performance is indicated by a horizontal dashed line. 
In sum, these plots illustrate the speed at which each algorithm attains optimal or near-optimal solutions across the datasets. Clearly, RECOL consistently solves more graphs than the other algorithms in 1 minute.
Although in the first 10 seconds, RECOL solves strictly no more graphs than others, indicating its good near-real-time ability.
Overall, these results demonstrate that \textsc{RECOL} is highly effective for short-time graph coloring.

\begin{figure}[t]
    \centering
    \includegraphics[width=1\linewidth]{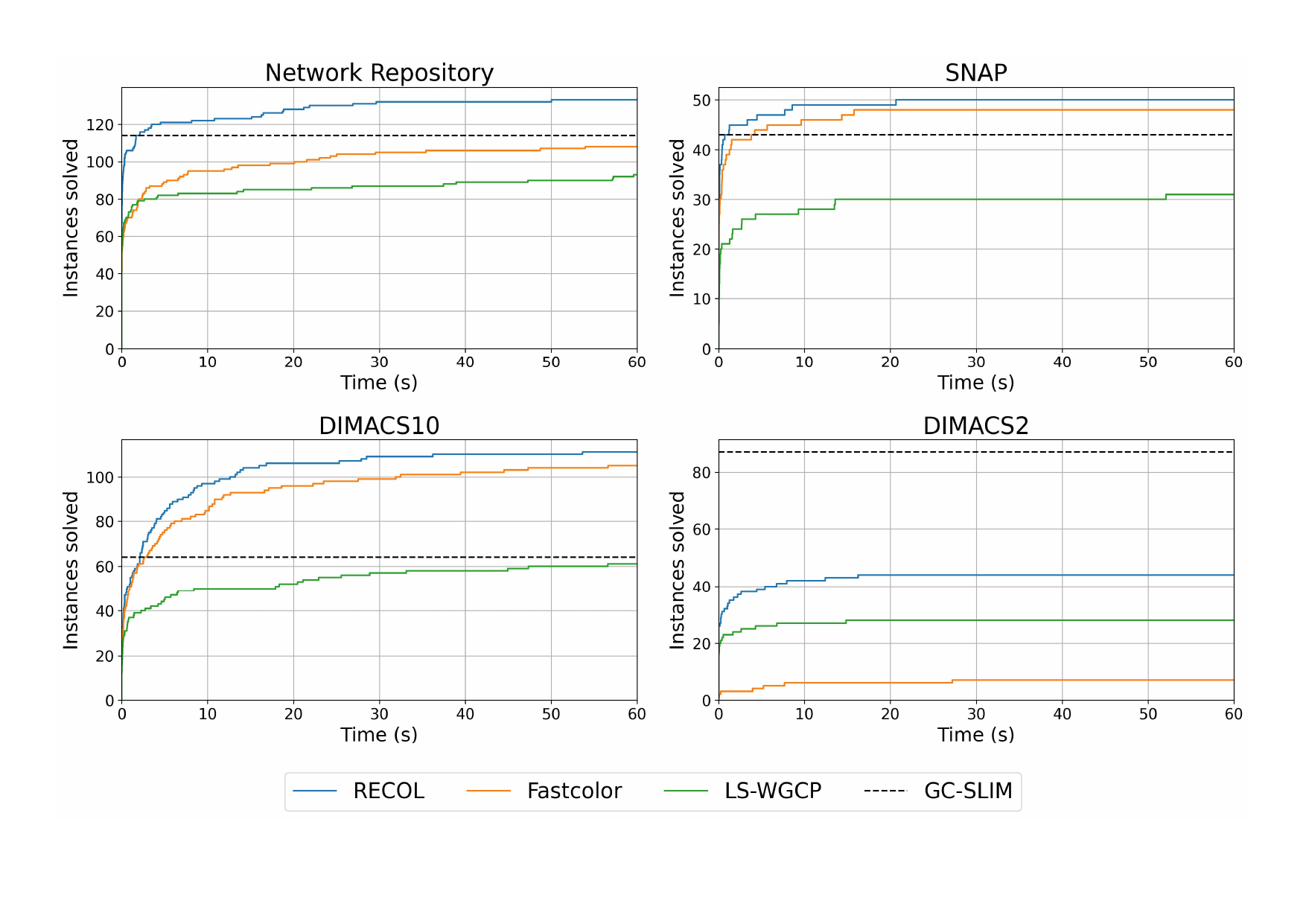}
    \caption{Cumulative number of graphs for which each algorithm attains the best solution over time.}
    \label{fig:placeholder}
\end{figure}


\subsection{Effectiveness of Reductions}



We conducted a further empirical study to evaluate the effectiveness of our reduction rules. 
Specifically, we compared four variants: the full \textsc{RECOL}, \textsc{RECOL} without degree reduction (\emph{No-Degree}), \textsc{RECOL} without domination reduction (\emph{No-Dominate}), and \textsc{RECOL} without independent set reduction and complement crown reduction (\emph{No-Ind-Crown}). 
Note that the independent set reduction (Lemma~\ref{independentreduction}) also covers the case of complement crown reduction when $|C|=1$ (Lemma~\ref{crown1reduction}), since when $|\overline{N}(u)| \leq 1$, the set $\overline{N}[u]$ forms an independent set as well.)
For this reason, we consider these two reduction rules together.

All experiments were conducted on the four benchmark collections under the same settings as before: a time limit of 60 seconds, ten runs with random seeds 0–9. 
Similar to Table~\ref{tab:stats}, we present statistical results in Tables~\ref{tab:reduction_node} and~\ref{tab:reduction_result}, which report the effect of the reductions on the number of remaining vertices and the coloring number, respectively. Here, min and strict min follow the same notation as in Table~\ref{tab:stats}.

\begin{table}[ht]
\centering
\scriptsize
\caption{Ablation results on reduction size over all datasets. Each entry shows the number of instances with the minimum remaining vertices.}
\label{tab:reduction_node}
\scriptsize
\begin{tabular}{l|cc|cc|cc|cc}
\toprule
Dataset & \multicolumn{2}{c|}{RECOL} & \multicolumn{2}{c|}{No-Ind-Crown} & \multicolumn{2}{c|}{No-Dominate} & \multicolumn{2}{c}{No-Degree} \\
 & min & strict min & min & strict min & min & strict min & min & strict min \\
\midrule
NETWORK  & 139 & 3 & 128 & 0 & 133 & 0 & 11 & 0 \\
SNAP     & 51 & 0 & 49 & 0 & 49 & 0 & 3 & 0 \\
DIMACS10 & 125 & 0 & 123 & 0 & 124 & 0 & 33 & 0 \\
DIMACS2  & 69 & 0 & 69 & 0 & 69 & 0 & 69 & 0 \\
\midrule
Total    & 384 & 3 & 369 & 0 & 375 & 0 & 116 & 0 \\
\bottomrule
\end{tabular}
\end{table}

Across the three datasets other than DIMACS2, \textsc{RECOL} with all reduction rules consistently shows an advantage over the ablated variants.
Without exception, the degree reduction removes the most vertices because the degrees of most vertices in these graphs are small.
Nevertheless, the domination reduction, independent set reduction, and complement crown reduction play an important role in many graphs. 
For example, on the \emph{tech-routers-rf} graph with 2113 vertices from the Network Repository, \textsc{RECOL} reduces the graph to an average of 463.9 vertices, compared to 465.4 with \emph{No-Ind-Crown}, 465.6 with \emph{No-Dominate}, and 2113.0 with \emph{No-Degree}.
On the dense DIMACS2 instances, however, none of the reduction rules can remove any vertices, resulting in identical outcomes across all variants.

\begin{table}[ht]
\centering
\scriptsize
\caption{Ablation results over all datasets. 
Each entry is reported as ``best (average hits)''.
}
\label{tab:reduction_result}
\begin{tabular}{l|cc|cc|cc|cc}
\toprule
Dataset & \multicolumn{2}{c|}{RECOL} & \multicolumn{2}{c|}{No-Ind-Crown} & \multicolumn{2}{c|}{No-Dominate} & \multicolumn{2}{c}{No-Degree} \\
 & min & strict min  & min & strict min  & min & strict min  & min & strict min \\
\midrule
NETWORK & 139 (9.18) & 0 & 139 (9.18) & 0 & 138 (9.23) & 0 & 130 (9.28) & 0 \\
SNAP & 51 (9.59) & 0 & 51 (9.59) & 0 & 51 (9.59) & 0 & 49 (9.43) & 0 \\
DIMACS10 & 126 (9.40) & 0 & 126 (9.40) & 0 & 126 (9.40) & 0 & 124 (9.37) & 1 (1) \\
DIMACS2 & 68 (6.91) & 1 (1) & 67 (7.04) & 0  & 65 (7.12) & 1 (1) & 64 (7.30) & 0  \\
\midrule
Total & 384 (8.77) & 1 (1) & 383 (8.80) & 0  & 380 (8.84) & 1 (1) & 367 (8.84) & 1 (1) \\
\bottomrule
\end{tabular}
\end{table}

Regarding the final coloring performance (Table~\ref{tab:reduction_result}), the differences among the variants are smaller; nevertheless, the benefits of the reductions remain evident. All the reduction rules contribute positively to the overall effectiveness, particularly Degree Reduction, and their combined application consistently delivers the strongest performance across datasets.

\section{Summary}

This paper proposed a novel reduction-rule-based heuristic algorithm for graph coloring on large-scale graphs, named \textit{RECOL}. We introduced four reduction rules: degree-based reduction, domination reduction, complement-crown decomposition, and independent set reduction. 
The algorithm alternates between computing the upper and lower bounds of the chromatic number and applying these reduction rules. Once the reduction process is completed or reaches a time limit, a heuristic coloring algorithm is applied to the remaining graph.

Extensive experiments on four benchmark datasets demonstrate that \textit{RECOL} is highly efficient and effective on large, sparse graphs, yielding colorings with fewer colors within short time limits while substantially reducing graph size. Our ablation study shows that all reduction rules contribute to shrinking the graph size, with degree reduction having the most prominent impact. The other reduction rules, though individually less dominant, still provide meaningful improvements in many cases. This highlights the importance of carefully determining how to apply each reduction to balance solution quality with computational efficiency. Designing more adaptive strategies to optimize the integration of reduction rules thus constitutes a promising direction for future research.
\bibliographystyle{elsarticle-num} 
\bibliography{color}



\appendix

\end{document}